\newcommand{\ket}[1]{{\left\vert{#1}\right\rangle}}
\newcommand{\qw}[1][-1]{\ar @{-} [0,#1]}
\newcommand{\qwx}[1][-1]{\ar @{-} [#1,0]}
\newcommand{\gate}[1]{*+<.6em>{#1} \POS ="i","i"+UR;"i"+UL **\dir{-};"i"+DL **\dir{-};"i"+DR **\dir{-};"i"+UR **\dir{-},"i" \qw}
\newcommand{\meter}{*=<1.8em,1.4em>{\xy ="j","j"-<.778em,.322em>;{"j"+<.778em,-.322em> \ellipse ur,_{}},"j"-<0em,.4em>;p+<.5em,.9em> **\dir{-},"j"+<2.2em,2.2em>*{},"j"-<2.2em,2.2em>*{} \endxy} \POS ="i","i"+UR;"i"+UL **\dir{-};"i"+DL **\dir{-};"i"+DR **\dir{-};"i"+UR **\dir{-},"i" \qw}
\newcommand{\control}{*!<0em,.025em>-=-<.2em>{\bullet}}
\newcommand{\controlo}{*+<.01em>{\xy -<.095em>*\xycircle<.19em>{} \endxy}}
\newcommand{\ctrl}[1]{\control \qwx[#1] \qw}
\newcommand{\ctrlo}[1]{\controlo \qwx[#1] \qw}
\newcommand{\multigate}[2]{*+<1em,.9em>{\hphantom{#2}} \POS [0,0]="i",[0,0].[#1,0]="e",!C *{#2},"e"+UR;"e"+UL **\dir{-};"e"+DL **\dir{-};"e"+DR **\dir{-};"e"+UR **\dir{-},"i" \qw}
\newcommand{\ghost}[1]{*+<1em,.9em>{\hphantom{#1}} \qw}
\newcommand{\push}[1]{*{#1}}
\newcommand{\lstick}[1]{*!R!<.5em,0em>=<0em>{#1}}
\newcommand{\Qcircuit}{\xymatrix @*=<0em>}
\newcommand{\pureghost}[1]{*+<1em,.9em>{\hphantom{#1}}}
\newcommand{\nn}{\nonumber \\}
\newcommand{\defeq}{\colonequals}
\newcommand{\ignore}[1]{}
\newtheorem{theorem}{Theorem}
\newtheorem{lemma}[theorem]{Lemma}
\newtheorem{definition}{Definition}
\newtheorem{corollary}[theorem]{Corollary}
\newenvironment{proofof}[1]{\begin{proof}[Proof of~#1.]}{\end{proof}}
\newcommand{\eq}[1]{\hyperref[eq:#1]{(\ref*{eq:#1})}}
\renewcommand{\sec}[1]{\hyperref[sec:#1]{Section~\ref*{sec:#1}}}
\newcommand{\app}[1]{\hyperref[app:#1]{Appendix~\ref*{app:#1}}}
\newcommand{\fig}[1]{\hyperref[fig:#1]{Figure~\ref*{fig:#1}}}
\newcommand{\thm}[1]{\hyperref[thm:#1]{Theorem~\ref*{thm:#1}}}
\newcommand{\lem}[1]{\hyperref[lem:#1]{Lemma~\ref*{lem:#1}}}
\newcommand{\cor}[1]{\hyperref[cor:#1]{Corollary~\ref*{cor:#1}}}
\newcommand{\defn}[1]{\hyperref[def:#1]{Definition~\ref*{def:#1}}}
\newcommand{\R}{\mathbb{R}}
\newcommand{\C}{\mathbb{C}}
\newcommand{\Rem}{\mathop{\mathbf{R}}\nolimits}
\newcommand{\norm}[1]{\|{#1}\|}
\newcommand{\Norm}[1]{\left\|{#1}\right\|}
\DeclareMathOperator{\poly}{poly}
\begin{document}
\title{Hamiltonian Simulation Using Linear Combinations of Unitary Operations}
\author{Andrew M. Childs}
\affiliation{Department of Combinatorics \& Optimization, University of Waterloo, Ontario N2L 3G1, Canada}
\affiliation{Institute for Quantum Computing, University of Waterloo, Ontario N2L 3G1, Canada}
\author{Nathan Wiebe}
\affiliation{Institute for Quantum Computing, University of Waterloo, Ontario N2L 3G1, Canada}

\begin{abstract}
We present a new approach to simulating Hamiltonian dynamics based on implementing linear combinations of unitary operations rather than products of unitary operations.  The resulting algorithm has superior performance to existing simulation algorithms based on product formulas and, most notably, scales better with the simulation error than any known Hamiltonian simulation technique.  Our main tool is a general method to nearly deterministically implement linear combinations of nearby unitary operations, which we show is optimal among a large class of methods.
\end{abstract}
\maketitle

%%%%%%%%%%%%%%%%%%%%%%%%%%%%%%%%%%%%%%%%%%%%%%%%%%%%%%%%%%%%%%%%%%%%%%%%%%%%%%%%

\section{Introduction}
Simulating the time evolution of quantum systems is a major potential application of quantum computers.  While quantum simulation is apparently intractable using classical computers, quantum computers are naturally suited to this task. Even before a fault-tolerant quantum computer is built, quantum simulation techniques can be used to prove equivalence between Hamiltonian-based models of quantum computing (such as adiabatic quantum computing~\cite{ADKLLR07} and continuous-time quantum walks~\cite{Chi09a}) and to develop novel quantum algorithms~\cite{FGGS00,CCDFGS03,FGG07,HHL09,B10}.

In recent years there has been considerable interest in optimizing quantum simulation algorithms.  The original approach to quantum simulation, based on product formulas, was proposed by Lloyd for time-independent local Hamiltonians~\cite{lloyd}.  This work was later generalized to give efficient simulations of sparse time-independent Hamiltonians that need not have tensor-product structure~\cite{CCDFGS03,ATS03}.  Further refinements of these schemes have yielded improved performance~\cite{Chi04,BACS07,PaZh10,CK11} and the techniques have been extended to cover time-dependent Hamiltonians~\cite{WBHS11,PQS+11}.  Recently a new paradigm has been proposed that uses quantum walks rather than product formulas~\cite{Chi09,BC09}.  This approach is superior to the product formula approach for simulating sparse time-independent Hamiltonians with constant accuracy (and in addition, can be applied to non-sparse Hamiltonians), whereas the product formula approach is superior for generating highly accurate simulations of sparse Hamiltonians.

The performance of simulation algorithms based on product formulas is limited by the fact that high-order approximations are needed to optimize the algorithmic complexity.  The best known high-order product formulas, the Lie--Trotter--Suzuki formulas, approximate the time evolution using a product of unitary operations whose length scales exponentially with the order of the formula \cite{Suz91}.  In contrast, classical methods known as multi-product formulas require a sum of only polynomially many unitary operations to achieve the same accuracy~\cite{BCR99} (although of course the overall cost of classical simulations based on multi-product formulas remains exponential in the number of qubits used to represent the Hilbert space).  However, these methods cannot be directly implemented on a quantum computer because unitary operations are not closed under addition.

Our work addresses this by presenting a non-deterministic algorithm that can be used to perform linear combinations of unitary operators on quantum computers.  We achieve high success probabilities provided the operators being combined are near each other.
We apply this tool to quantum simulation and thereby improve upon existing quantum algorithms for simulating Hamiltonian dynamics.  Our main result is as follows.

\begin{theorem}\label{thm:mainresult}
Let the system Hamiltonian be $H=\sum_{j=1}^m H_j$ where each $H_j\in \C^{2^n\times 2^n}$ is Hermitian and satisfies $\norm{H_j} \le h$ for a given constant $h$. Then the Hamiltonian evolution $e^{-iHt}$ can be simulated on a quantum computer with failure probability and error at most $\epsilon$ as a product of linear combinations of unitary operators.  In the limit of large $m,ht,1/\epsilon$, this simulation uses
\begin{equation}\label{eq:mainresult}
\tilde O\left(m^2 h te^{1.6\sqrt{\log(m h t/\epsilon)}}\right)
\end{equation}
elementary operations and exponentials of the $H_j$s.
\end{theorem}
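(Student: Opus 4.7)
My plan is to marry multi-product formulas, which have long been used in classical numerical integration but are not directly implementable by a quantum computer because they are not unitary, with the nearly-deterministic procedure developed above for implementing a linear combination of nearby unitaries. The workflow has three stages: (i) partition $[0,t]$ into $r$ short segments of equal length $\Delta t = t/r$; (ii) on each segment, approximate $e^{-iH\Delta t}$ to high order by a multi-product formula $\sum_\ell c_\ell U_\ell$, where each $U_\ell$ is a product of exponentials of the $H_j$s; and (iii) implement each such linear combination on the quantum computer via the LCU primitive, producing a quantum channel that approximates $e^{-iH\Delta t}$. The overall circuit is the composition of the $r$ per-segment channels.

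Fixing the segment length to $\Delta t = \Theta(1/(mh))$ keeps every $U_\ell$ within a small distance of the identity, which is exactly the regime where the LCU primitive succeeds nearly deterministically. A multi-product formula of order $2k$, constructed as a linear combination of $k$ powers of a second-order symmetric Trotter splitting with judiciously chosen coefficients, incurs a per-segment truncation error of order $(hm\Delta t)^{2k+1}$ and uses $O(k)$ summands, each of which is itself a product of $O(mk)$ individual exponentials. The resource cost per segment is therefore $\tilde{O}(mk^2)$ exponentials together with the LCU overhead, while the whole evolution uses $r = \Theta(mht)$ segments, giving a leading cost of $\tilde{O}(m^2 h t\, \mathrm{poly}(k))$ exponentials.

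I would then close the argument by (a) showing that the MPF truncation error is polynomially bounded in $k$ even as the coefficients $c_\ell$ grow with $k$, so that the aggregated truncation error $r \cdot O((hm\Delta t)^{2k+1})$ can be forced below $\epsilon$; and (b) requiring the per-segment LCU failure probability to be $O(\epsilon/r)$ so that a union bound leaves an overall failure probability at most $\epsilon$. Both conditions impose a lower bound on $k$ of the shape $k = \Omega(\sqrt{\log(mht/\epsilon)})$. Optimizing $k$ to simultaneously meet the error target and minimize the $\mathrm{poly}(k)$ cost factor produces the sub-exponential multiplier $e^{1.6\sqrt{\log(mht/\epsilon)}}$ and yields the bound in \eq{mainresult}.

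The main obstacle I anticipate is not the error accounting for a single MPF step, which is essentially classical, but controlling the interaction between the two sources of error. The LCU primitive introduces damping that accumulates multiplicatively over segments, and its guarantee depends on the summands being uniformly close to a common unitary, so the segment length, the MPF order, and the LCU post-selection success probability must all be balanced simultaneously. Equally delicate is ensuring that the $1$-norm $\sum_\ell |c_\ell|$ of the MPF coefficients grows sufficiently slowly with $k$, since this norm directly governs the LCU oblivious-amplitude-amplification depth and its success probability; establishing a sharp enough bound on it is what ultimately permits the choice $k = \Theta(\sqrt{\log(mht/\epsilon)})$ and hence the sub-exponential scaling in the final cost estimate.
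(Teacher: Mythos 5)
There is a genuine gap, and it sits exactly where the paper locates the main difficulty: the sign problem of multi-product formulas. Your plan assumes you can ``require the per-segment LCU failure probability to be $O(\epsilon/r)$,'' controlled by the $1$-norm $\sum_\ell |c_\ell|$ and an oblivious-amplitude-amplification step. But the LCU primitive of this paper (\thm{main:nonunitary}) has no amplification; its subtraction step fails with probability $4\kappa/(\kappa+1)^2$ where $\kappa$ is the ratio of the positive to the negative coefficient mass, and the appendix proves this is optimal for the whole circuit family considered. For the standard MPF with $\ell_q=q$ (the one you describe, built on second-order Trotter), $\kappa\to 1$ exponentially fast in $k$, so the per-segment success probability tends to zero no matter how small the truncation error is. By Sheng's theorem the negative coefficients cannot be eliminated, so the missing idea is the paper's modified formula (\defn{MPF}): take $\ell_{k+1}=e^{\gamma(k+1)}$ with $\gamma\gtrsim 1+\log(\eta)/2\approx 0.41$ so that $|C_{k+1}|$ dominates $\sum_{q\le k}|C_q|$ and $\kappa$ stays bounded away from $1$ (\lem{1}, \cor{gamma}), together with the approximate-inversion trick $E_k(-\lambda)E_k(\lambda)=\openone+O(\lambda^{4k+2})$ that lets a constant per-segment subtraction failure rate be tolerated via retries and a Chernoff bound.

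This omission also breaks your cost accounting. You claim $\tilde O(mk^2)$ exponentials per segment and hence $\tilde O(m^2ht\,\mathrm{poly}(k))$ total; if that were achievable with per-segment error $(mh\Delta t)^{2k+1}$, one could take $k=\Theta(\log(mht/\epsilon))$ and obtain cost poly-logarithmic in $1/\epsilon$, which is strictly stronger than \eq{mainresult} and is exactly what the paper says cannot be achieved by this method. The sub-exponential factor $e^{1.6\sqrt{\log(mht/\epsilon)}}$ does not come from optimizing a $\mathrm{poly}(k)$ overhead; it comes from balancing a genuinely exponential per-segment cost $e^{\Theta(k)}$ (the $5^{k-1}$ exponentials in the order-$2k$ Suzuki base $S_k$ times the $e^{\gamma(k+1)}$ repetitions forced by the $\kappa$ requirement, giving the constant $1+\log(\eta)/2+\log(25/3)\approx 2.54$) against the $(mht/\epsilon)^{1/4k}$ factor from the error bound; minimizing $e^{ck+\log(mht/\epsilon)/4k}$ is what yields $k_{\rm opt}=\Theta(\sqrt{\log(mht/\epsilon)})$. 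Your outline correctly identifies the segmentation, the role of $\Delta$-closeness for the addition steps, and the need to bound the coefficient norm, but without the modified $\ell_q$, the $\kappa$ analysis, and the inversion-based error recovery, the argument does not go through.
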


Although we have not specified the method used to simulate the exponential of each $H_j$, there are well-known techniques to simulate simple Hamiltonians.  In particular, if $H_j$ is $1$-sparse (i.e., has at most one non-zero matrix element in each row and column), then it can be simulated using $O(1)$ elementary operations~\cite{ATS03,CCDFGS03}, so \eq{mainresult} gives an upper bound on the complexity of simulating sparse Hamiltonians.

Our simulation is superior to the previous best known simulation algorithms based on product formulas.  Previous methods have scaling of the same form, but with the coefficient 1.6 replaced by 2.54 \cite[Theorem 1]{BACS07} or 2.06 \cite[Theorem 1]{WBHS10}.
Also note that Theorem 1 of~\cite{PaZh10} gives a similar scaling as in \cite{WBHS10}, except the term in the exponential depends on the second-largest $\norm{H_j}$ rather than $h$.

Perhaps more significant than the quantitative improvement to the complexity of Hamiltonian simulation is that our approach demonstrates a new class of simulation protocols going beyond the Lie--Trotter--Suzuki paradigm, the approach used in most previous simulation algorithms.  It remains unknown how efficiently one can perform quantum simulation as a function of the allowed error $\epsilon$, and we hope our work will lead to a better understanding of this question.

The remainder of this article is organized as follows.
In \sec{add}, we provide a general method for implementing linear combinations of unitary operators using quantum computers and lower bound its success probability.
This method is optimal among a large class of such protocols, as shown in the appendix.
In \sec{mpf}, we provide a brief review of Lie--Trotter--Suzuki formulas and multi-product formulas and then show how to implement multi-product formulas on quantum computers.
Error bounds and overall success probabilities of our simulations are derived in \sec{mpferror}.  We then bound the number of quantum operations used in our simulation in \lem{nexp}, from which \thm{mainresult} follows.
We conclude in \sec{conclusions} with a summary of our results and a discussion of directions for future work.

%%%%%%%%%%%%%%%%%%%%%%%%%%%%%%%%%%%%%%%%%%%%%%%%%%%%%%%%%%%%%%%%%%%%%%%%%%%%%%%%

\section{Adding and Subtracting Unitary Operations Using Quantum Computers}\label{sec:add}

In this section we describe basic protocols for implementing linear combinations of unitary operations.  \lem{main:2case} shows that a quantum computer can nearly deterministically perform a weighted average of two nearby unitary operators.  (Our approach is reminiscent of a technique for implementing fractional quantum queries using discrete queries \cite{CGMSY09}.)  We build upon \lem{main:2case} in \thm{main:nonunitary}, showing that a quantum computer can non-deterministically implement an arbitrary linear combination of a set of unitary operators.

\begin{lemma}\label{lem:main:2case}
Let $U_a,U_b \in \C^{2^n\times 2^n}$ be unitary operations and let $\Delta=\norm{U_a-U_b}$.  Then for any $\kappa\ge 0$, there exists a quantum algorithm that can implement an operator proportional to $\kappa U_a + U_b$ with failure probability at most $\Delta^2\kappa/(\kappa+1)^2\le 4\kappa/(\kappa+1)^2$.
\end{lemma}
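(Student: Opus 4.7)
The plan is to use an ancilla-plus-select-and-prepare construction, exactly analogous to the standard ``linear combination of unitaries'' gadget. Introduce a single ancilla qubit and a one-qubit preparation unitary $B$ defined by $B\ket{0}=\bigl(\sqrt{\kappa}\ket{0}+\ket{1}\bigr)/\sqrt{\kappa+1}$, together with the controlled ``select'' operation $W = \ket{0}\bra{0}\otimes U_a + \ket{1}\bra{1}\otimes U_b$. The circuit I would analyze is: prepare the ancilla with $B$, apply $W$, apply $B^\dagger$ to the ancilla, then measure it and declare success on outcome $\ket{0}$.

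The first step is a direct algebraic computation. Applied to $\ket{0}\ket{\psi}$, the above sequence produces
\begin{equation}
\tfrac{1}{\kappa+1}\,\ket{0}\bigl(\kappa U_a+U_b\bigr)\ket{\psi} \;+\; \ket{1}\ket{\phi}
\end{equation}
for some (unnormalized) $\ket{\phi}$. Hence conditioned on the ancilla outcome $\ket{0}$, the resulting state on the system is proportional to $(\kappa U_a+U_b)\ket{\psi}$, as required, and the success probability is
\begin{equation}
P_{\rm succ}=\tfrac{1}{(\kappa+1)^2}\,\norm{(\kappa U_a+U_b)\ket{\psi}}^2.
\end{equation}

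The only nonroutine step is lower bounding this probability in terms of $\Delta$. The key identity is that for any unit vector $\ket{\psi}$,
\begin{equation}
\norm{(U_a-U_b)\ket{\psi}}^2 = 2 - 2\,\mathrm{Re}\bra{\psi}U_a^\dagger U_b\ket{\psi},
\end{equation}
so $\mathrm{Re}\bra{\psi}U_a^\dagger U_b\ket{\psi}\ge 1-\tfrac{1}{2}\Delta^2$. Expanding $\norm{(\kappa U_a+U_b)\ket{\psi}}^2=\kappa^2+1+2\kappa\,\mathrm{Re}\bra{\psi}U_a^\dagger U_b\ket{\psi}$ and substituting yields
\begin{equation}
\norm{(\kappa U_a+U_b)\ket{\psi}}^2 \;\ge\; (\kappa+1)^2 - \kappa\Delta^2,
\end{equation}
and hence a failure probability of at most $\kappa\Delta^2/(\kappa+1)^2$. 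The second bound in the lemma then follows by using $\Delta\le\norm{U_a}+\norm{U_b}=2$ to replace $\Delta^2$ by $4$.

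The main ``obstacle'' is essentially just recognizing the identity connecting $\mathrm{Re}\bra{\psi}U_a^\dagger U_b\ket{\psi}$ to $\Delta$; once that is in hand, everything else is bookkeeping. I would finish by remarking that the circuit uses only one controlled-$U_a$, one controlled-$U_b$, two applications of the one-qubit rotation $B$, and one ancilla measurement, so the implementation cost beyond the two controlled unitaries is $O(1)$.
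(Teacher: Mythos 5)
Your proposal is correct and uses essentially the same circuit and argument as the paper: the paper's $V_\kappa$ is exactly your preparation unitary $B$, followed by the same select operation and inverse preparation. The only cosmetic difference is that the paper reads off the failure branch directly as $\frac{\sqrt{\kappa}}{\kappa+1}(U_b-U_a)\ket{\psi}$ and bounds its norm, whereas you lower bound the success branch via the polarization identity; these are two sides of the same probability-conservation computation.
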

\begin{proof}
Let 
\begin{equation}
V_\kappa\defeq
\begin{pmatrix}\sqrt{\frac{\kappa}{\kappa+1}} & \frac{-1}{\sqrt{\kappa+1}}\\
\frac{1}{\sqrt{\kappa+1}} & \sqrt{\frac{\kappa}{\kappa+1}}\end{pmatrix}.
\end{equation}
Our protocol for implementing the weighted average of $U_a$ and $U_b$ works as follows (see \fig{main:addition}).  First, we perform $V_{\kappa}$ on an ancilla qubit.  Second, we perform a zero-controlled $U_a$ gate and a controlled $U_b$ gate on the state $\ket{\psi}$ using the ancilla as the control.  Finally, we apply $V_\kappa^\dagger$ to the ancilla qubit and measure it in the computational basis.  This protocol performs the following transformations:
\begin{align}
\ket{0}\ket{\psi}
&\mapsto
\left(\sqrt{\frac{\kappa}{\kappa+1}}\ket{0}+\frac{1}{\sqrt{\kappa+1}}\ket{1} \right)\ket{\psi}\nonumber\\&\mapsto
\left(\sqrt{\frac{\kappa}{\kappa+1}}\ket{0}U_a\ket{\psi}+\frac{1}{\sqrt{\kappa+1}}\ket{1}U_b\ket{\psi} \right)\nonumber\\
&\mapsto \ket{0}\left(\frac{\kappa}{\kappa+1} U_a + \frac{1}{\kappa+1} U_b \right)\ket{\psi}+\ket{1}\frac{\sqrt{\kappa}}{\kappa+1}(U_b-U_a)\ket{\psi}.\label{eq:thm1:last}
\end{align}
If the first qubit is measured and a result of $0$ is observed, then this protocol performs
 $\ket{\psi}\mapsto (\kappa U_a+U_b)\ket{\psi}$ (up to normalization).  If the measurement yields $1$ then the algorithm fails.  The probability of this failure, $P_+$, is
\begin{equation}
P_+\le \frac{\norm{U_b-U_a}^2\kappa}{(\kappa+1)^2} = \frac{\Delta^2\kappa}{(\kappa+1)^2}.
\end{equation}
Since $\Delta \le 2$, this is at most $\frac{4\kappa}{(\kappa+1)^2}$.
\end{proof}

By substituting $U_b\rightarrow -U_b$, \lem{main:2case} also shows that unitary operations can be subtracted.  (Alternatively, replacing $V_\kappa^\dagger$ with $V_\kappa$ in \fig{main:addition} also simulates $U_a-U_b$.)  Similarly, we could make the weights of each unitary complex by multiplying $U_0$ and $U_1$ by phases, although we will not need to make use of this freedom.

\begin{figure}
\capstart
\[ \large
\Qcircuit @C=1.5em @R=1em @! {
\lstick{|0\rangle} & \gate{V_\kappa} & \ctrlo{1} & \ctrl{1} & \gate{V_\kappa^\dag} & \meter \\
\lstick{|\psi\rangle} & \qw & \gate{U_a} & \gate{U_b} & \qw & \qw
}
\]
\caption{Quantum circuit for non-deterministically performing an operator proportional to $\kappa U_a + U_b$ given a measurement outcome of zero.\label{fig:main:addition}}
\end{figure}
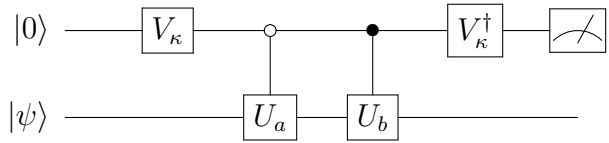

General linear combinations of unitary operators can be performed by iteratively applying \lem{main:2case}.  The following theorem gives constructs such a simulation and provides bounds on the probabilities of failure.

\begin{theorem}\label{thm:main:nonunitary}
Let $V\defeq\sum_{q=1}^{k+1} C_q U_q$ for $k\ge 1$ where $C_q\ne 0$, $\norm{U_q}=1$, and $\max_{q\ne q'}\norm{U_q-U_{q'}} \le \Delta$.  Let $\kappa\defeq (\sum_{q:~C_q>0} C_q)/(\sum_{q:~C_q<0} |C_q|)$. Then there exists a quantum algorithm that implements an operator proportional to $V$ with probability of failure $P_+ + P_-$ with
\begin{align}
  P_+&\le \frac{k\Delta^2}{4}, \\
  P_-&\le \frac{4\kappa}{(\kappa+1)^2},
\end{align}
where $P_+$ is the probability of failing to add some pair of operators and $P_-$ is the probability of failing to perform the subtraction.
\end{theorem}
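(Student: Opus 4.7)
The plan is to split $V$ into its positive- and negative-coefficient parts, $V_+ \defeq \sum_{q:\, C_q>0} C_q U_q$ and $V_- \defeq \sum_{q:\, C_q<0} |C_q| U_q$, to build each part separately by iterating \lem{main:2case} in its addition form, and finally to combine the two parts with one further application of \lem{main:2case} in its subtraction form (the $U_b \to -U_b$ variant noted just after that lemma). The reason for performing this split up front is that within each part all coefficients have the same sign, so every intermediate operator produced during the construction is a genuine \emph{convex} combination of the original unitaries $U_q$ and therefore remains within $\Delta$ of each $U_q$ by the triangle inequality.

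Concretely, to build $V_+$ I would process its $n_+$ unitaries in some fixed order $U_{q_1},U_{q_2},\dots$ and inductively maintain an accumulator that, conditioned on all ancillas introduced so far measuring $0$, applies the operator $W_j \defeq \sum_{i \le j}(C_{q_i}/S_j) U_{q_i}$ with $S_j \defeq \sum_{i \le j} C_{q_i}$. To pass from $W_j$ to $W_{j+1}$ I introduce a fresh ancilla and invoke the circuit of \lem{main:2case} with $\kappa_j \defeq S_j / C_{q_{j+1}}$, using the previously built sub-circuit as the ``$U_a$'' block (applying $W_j$ conditioned on its own ancillas reading $0$) and $U_{q_{j+1}}$ as the ``$U_b$'' block. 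Since $W_j$ is a convex combination of $U_{q_1},\dots,U_{q_j}$ and each such $U_{q_i}$ is within $\Delta$ of $U_{q_{j+1}}$, the triangle inequality gives $\norm{W_j - U_{q_{j+1}}} \le \Delta$; redoing the computation of the proof of \lem{main:2case} conditioned on all prior ancillas reading $0$ then shows that the new ancilla fails with probability at most $\Delta^2 \kappa_j/(\kappa_j+1)^2 \le \Delta^2/4$. Running the analogous construction on $V_-$ produces a second accumulator. With $(n_+-1)+(n_--1) = k-1$ addition steps total, a union bound yields $P_+ \le (k-1)\Delta^2/4 \le k\Delta^2/4$.

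Finally, I would combine the two accumulators---which on joint success apply operators proportional to $V_+$ and $V_-$ respectively---via one further invocation of \lem{main:2case} with the $U_b \to -U_b$ substitution and weight ratio equal to the $\kappa$ in the theorem statement. Here the two operators being combined need not be close to any common unitary, so only the coarse bound $\norm{\cdot + \cdot} \le 2$ is available, which gives $P_- \le 4\kappa/(\kappa+1)^2$. The main subtlety I expect to have to pin down is the legality of iterating \lem{main:2case} even though the operator placed in its ``$U_a$'' slot is the non-unitary, success-conditional output of a previous sub-circuit; the way to handle this cleanly is to regard the whole construction as a single coherent quantum circuit with one dedicated ancilla per combination step, declare success to be the event that every ancilla reads $0$, and verify that the per-step bounds of \lem{main:2case} remain valid \emph{conditioned on all prior ancillas reading $0$}. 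The convex-combination property preserved within each of $V_+$ and $V_-$ is exactly what keeps every per-step failure at $\Delta^2/4$ rather than the coarser $4\kappa/(\kappa+1)^2$ we are forced to use for the final subtraction.
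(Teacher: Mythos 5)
Your proposal is correct and follows essentially the same route as the paper's proof: split $V$ into its positive- and negative-coefficient parts, build each by iterated application of \lem{main:2case} with deferred measurement ($k-1$ additions total), and combine them with a single subtraction step, yielding $P_+ \le k\Delta^2/4$ and $P_- \le 4\kappa/(\kappa+1)^2$. Your explicit justification that each accumulator stays within $\Delta$ of the next unitary (via the convex-combination/triangle-inequality argument) and your remark about validating the per-step bounds inside one coherent circuit are details the paper leaves implicit, but the underlying argument is the same.
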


\begin{proof}
Let
\begin{align}
A&\defeq\frac{1}{\sum_{q:C_q>0} C_q}\sum_{q\colon C_q>0}C_q U_q,\\
B&\defeq\frac{1}{\sum_{q:C_q<0} |C_q|}\sum_{q\colon C_q<0}|C_q| U_q.
\end{align}
We implement $V\propto \kappa A-B$ using circuits that non-deterministically implement operators proportional to $A$ and $B$.  We recursively perform these sums  by using the circuits given in \lem{main:2case}, except that we defer measurement of the output until the algorithm is complete.
We then implement $\kappa A -B$ by using our addition circuit, with $U_a$ taken to be the circuit that implements $A$ and $U_b$ taken to be the circuit that implements $-B$.  Then we measure each of the control qubits for the addition steps.  Finally, if all the measurement results are zero (indicating success),  we measure the control qubit for the subtraction step.  If that qubit is zero, then we know from prior analysis that the non-unitary operation $V$ is implemented successfully.

The operators $A$ and $B$ are implemented by recursively adding terms.  For example, the sum $U_1+U_2+U_3+U_4$ is implemented as $(((U_1+U_2)+U_3)+U_4)$.  Implementing the sums in $A$ and $B$ requires $k-1$ addition operations, so $V$ can be implemented using $k-1$ addition operations and one subtraction operation (assuming that all the control qubits are measured to be zero at the end of the protocol).

According to \lem{main:2case}, the probability of failing to implement $V$, given that we successfully implement $A$ and $B$, is
\begin{equation}
P_-\le \frac{4\kappa}{(\kappa+1)^2}.
\end{equation}
The probability of failing to perform the $k-1$ sums needed to construct $A$ and $B$ obeys
\begin{equation}
P_+
\le (k-1)\frac{\Delta^2\kappa}{(\kappa+1)^2}\le \frac{k\Delta^2}{4},
\end{equation}
where the last step follows from the fact that we can take $\kappa \ge 1$ without loss of generality.
\end{proof}

These results show that we can non-deterministically implement linear combintations of unitary operators with high probability provided that $\kappa\gg 1$ and $\Delta\ll 1$.  As we will see shortly, this situation can naturally occur in quantum simulation problems.

Note that it is not possible to increase the success probability of the algorithm by replacing the single-qubit unitary $V_\kappa$ with a different unitary, even if that unitary is allowed to act on all of the ancilla qubits simultaneously.  We present this argument in \app{opt}.

%%%%%%%%%%%%%%%%%%%%%%%%%%%%%%%%%%%%%%%%%%%%%%%%%%%%%%%%%%%%%%%%%%%%%%%%%%%%%%%%

\section{Implementing Multi-Product Formulas on Quantum Computers}\label{sec:mpf}

In this section we present a new approach to quantum simulation: we approximate the time evolution using a sequence of non-unitary operators that are each a linear combination of product formulas.  Such sums of product formulas are known in the numerical analysis community as multi-product formulas, and can be more efficient than product formulas for classical computations~\cite{BCR99,Chin10}.  We show how to implement multi-product formulas using quantum computers by leveraging our method for non-deterministically performing linear combinations of unitary operators.

\subsection{Review of Lie--Trotter--Suzuki and Multi-Product Formulas}
Product formula approximations can be used to accurately approximate an operator exponential as a product of operator exponentials that can be easily implemented.  Apart from their high degree of accuracy, these approximations are useful because they approximate a unitary operator with a sequence of unitary operators, making them ideally suited for quantum computing applications.

The most accurate known product formula approximations are the Lie--Trotter--Suzuki formulas, which approximate $e^{-iHt}$ for $H=\sum_{j=1}^m H_j$ as a product of the form
$$
  e^{-iHt}\approx\prod_{k=1}^{N_{\exp}} e^{-i H_{j_k} t_k}.
$$
These formulas are recursively defined for any integer $\chi>0$ by~\cite{Suz91}
\begin{align}
S_1(t)&=\prod_{j=1}^m e^{-i H_j t/2}\prod_{j=m}^1 e^{-i H_j t/2},\nonumber\\
S_\chi(t)&=\left(S_{\chi-1}(s_{\chi-1}t)\right)^2S_{\chi-1}([1-4s_{\chi-1}]t)\left(S_{\chi-1}(s_{\chi-1}t)\right)^2\label{eq:1_2},
\end{align}
where $s_p=(4-4^{1/(2p+1)})^{-1}$ for any integer $p>0$.
This choice of $s_p$ is made to ensure that the Taylor series of $S_\chi$ matches that of $e^{-iHt}$ to $O(t^{2\chi +1})$.
Consequently, the approximation can be made arbitrarily accurate for suitably large values of $\chi$ and small values of $t$.

The advantage of these formulas is clear: they are highly accurate and approximate $U(t) \defeq e^{-iHt}$ by a sequence of unitary operations, which can be directly implemented using a quantum computer.  The primary disadvantage is that they require $O(5^k)$ exponentials to construct an $O(t^{2k+1})$ approximation.  This scaling leads to quantum simulation algorithms with complexity $(\norm{H}t)^{1+o(1)}$.  A product formula requiring significantly fewer exponentials could result in a substantial performance improvement over existing product formula-based quantum simulation algorithms.

In the context of classical simulation, multi-product approximations were introduced to address these problems~\cite{BCR99,Chin10}.  Multi-product formulas generalize the approximation-building procedure used to construct $S_\chi$ to allow sums of product formulas.  The resulting formulas are simpler since it is easier to construct a Taylor series by adding polynomials than by multiplication alone.  Specifically, multi-product formulas only need $O(k^2)$ exponentials to construct an approximation of $U(t)$ to $O(t^{2k+1})$.

We consider multi-product formulas of the form
\begin{equation}
M_{k,\chi}(t)=\sum_{q=1}^{k+1} C_q S_\chi(t/\ell_q)^{\ell_q},\label{eq:mkdef}
\end{equation}
where the formula is accurate to $O(t^{2(k+\chi)+1})$.
Here $\ell_1,\ldots,\ell_{k+1}$ are distinct natural numbers and $C_1,\ldots,C_{k+1} \in \R$ satisfy $\sum_{q=1}^{k+1} C_q=1$.  Explicit expressions for the coefficients $C_q$ are known for the case $\chi=1$~\cite{Chin10}:
\begin{equation}
C_q=\prod_{j=\{1,\ldots,k+1\}\setminus q} \frac{\ell_q^2}{\ell_q^2-\ell_j^2}.\label{eq:cqgen}
\end{equation}
We will show later that the same expressions for $C_q$ also apply for $\chi>1$.

For classical simulations, the most numerically efficient formulas correspond to $\ell_q=q$ and $\chi=1$.  The simplest example of a multi-product formula is the Richardson extrapolation formula for $S_1$~\cite{Ri1911}:
\begin{equation}
U(t)=\frac{4S_1(t/2)^2-S_1(t)}{3}+O(t^5).
\end{equation}
Even though the above expression is non-unitary, it is very close to a unitary operator.  In fact, there exists a unitary operator within distance $O(t^{10})$ of the multi-product formula.  In general, Blanes, Casas, and Ros show that if a multi-product formula $M_{k,\chi}$ is accurate to $O(t^{2(k+\chi)+1})$ then it is unitary to $O(t^{4(k+\chi)+2})$~\cite{BCR99}; therefore such formulas are practically indistinguishable from unitary operations in many applications~\cite{Chin10,BCR99}.

The principal drawback of these formulas is that they are less numerically stable than Lie--Trotter--Suzuki formulas.  Because they involve sums that nearly perfectly cancel, substantial roundoff errors can occur in their computation.  Such errors can be mitigated by using high numerical precision or by summing the multi-product formula in a way that minimizes roundoff error.

An additional drawback is that linear combinations of unitary operators are not natural to implement on a quantum computer.  Furthermore, our previous discussion alludes to a sign problem for the integrators: the more terms of the multi-product formula that have negative coefficients, the lower the success probability of the implementation of \thm{main:nonunitary}. This sign problem cannot be resolved completely because, as shown by Sheng~\cite{She89}, it is impossible to construct a high-order multi-product formula of the form in~\eq{mkdef} without using negative $C_q$.  Nevertheless, we show that this problem is not fatal and that multi-product formulas can be used to surpass what is possible with previous simulations based on product formulas.

\subsection{Implementing Multi-Product Formulas Using Quantum Computers}
We now discuss how to implement multi-product formulas using quantum computers.  The main obstacle is that the multi-product formulas most commonly used in classical algorithms have a value of $\kappa$ (as defined in \thm{main:nonunitary}) that approaches $1$ exponentially quickly as $k$ increases.  Thus the probability of successfully implementing such multi-product formulas using \thm{main:nonunitary} is exponentially small.

Instead, we seek a multi-product formula $M_{k,\chi}$, with a large value of $\kappa$, such that $\norm{M_{k,\chi}(\lambda)-U(\lambda)} \in O(\lambda^{2(k+\chi)+1})$.  Although many choices are possible, we take our multi-product formulas to be of the following form because they yield a large value of $\kappa$ while consisting of relatively few exponentials.
\begin{definition}\label{def:MPF}
Let $k \ge 0$ and $\chi \ge 1$ be integers, $\gamma$ a real number such that $e^{\gamma (k+1)}$ is an integer, and $S_\chi(\lambda)$ a symmetric product formula approximation to $U(\lambda)$ obeying $\norm{S_\chi(\lambda)-U(\lambda)} \in O(\lambda^{2\chi+1})$.  Then for any $t\in\R$ we define the multi-product formula $M_{k,\chi}(t)$ as
\begin{equation}
M_{k,\chi}(t) \defeq \sum_{q=1}^{k+1} C_q S_\chi(t/\ell_q)^{\ell_q}\label{eq:MkDef}
\end{equation}
where
\begin{equation}
\ell_q\defeq\begin{cases}q&\textrm{if $q\le k$}\\
e^{\gamma (k+1)}& \textrm{if $q=k+1$}\end{cases}\label{eq:lqDef}
\end{equation}
and
\begin{equation}
C_q\defeq \begin{cases}\frac{q^2}{q^2-e^{2\gamma (k+1)}} \prod_{j\ne q}^{k} \frac{q^2}{q^2-j^2}&\text{if $q\le k$}\\ \prod_{j=1}^{k} \frac{e^{2\gamma (k+1)}}{e^{2\gamma (k+1)}-j^2}& \text{if $q=k+1$}.\end{cases} \label{eq:cq}
\end{equation}
\end{definition}
We choose these values of $\ell_q$ because for sufficiently large $\gamma$ they guarantee that $C_{k+1}$ is much larger in absolute value than all other coefficients.  This ensures a high success probability in \thm{main:nonunitary} because $\kappa \ge |C_{k+1}|/\sum_{q=1}^k |C_{q}|$ is
large if $|C_{k+1}|$ exceeds the sum of all other $|C_q|$.

The following lemma shows that $M_{k,\chi}$ is a higher-order integrator than $S_k$.  Quantitative error bounds are proven in the next section.

\begin{lemma}\label{lem:MPFHigherOrder}
Let $M_{k,\chi}$ be a multi-product formula constructed according to \defn{MPF}.  Then for $\lambda\ll 1$ we have
\begin{equation}
\norm{M_{k,\chi}(\lambda)-U(\lambda)} \in O(\lambda^{2(k+\chi)+1}).
\end{equation}
\end{lemma}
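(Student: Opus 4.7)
The plan is to expand each factor $S_\chi(\lambda/\ell)^\ell$ asymptotically in $\ell^{-2}$ and $\lambda$, and then exploit the Vandermonde structure of the coefficients $C_q$ to annihilate all but the desired error order. Because $S_\chi$ is symmetric of order $2\chi$, the standard modified-equation analysis produces an effective Hamiltonian
\[
H_{\mathrm{eff}}(\lambda) = H + \lambda^{2\chi} K_1 + \lambda^{2\chi+2} K_2 + \cdots
\]
(an even power series in $\lambda$ whose first correction sits at order $\lambda^{2\chi}$) with $S_\chi(\lambda)=\exp(-i\lambda H_{\mathrm{eff}}(\lambda))$. Composing gives $S_\chi(\lambda/\ell)^\ell = \exp(-i\lambda H_{\mathrm{eff}}(\lambda/\ell))$, and a Duhamel expansion around $U(\lambda)=\exp(-i\lambda H)$ yields
\[
S_\chi(\lambda/\ell)^\ell - U(\lambda) = \sum_{j\ge\chi} \ell^{-2j}\,B_j(\lambda),
\]
where each $B_j(\lambda)$ is operator-valued, analytic in $\lambda$, and satisfies $B_j(\lambda)=O(\lambda^{2j+1})$ as $\lambda\to 0$.

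Substituting into $M_{k,\chi}(\lambda)=\sum_q C_q S_\chi(\lambda/\ell_q)^{\ell_q}$ and using $\sum_q C_q=1$ produces
\[
M_{k,\chi}(\lambda) - U(\lambda) = \sum_{j\ge\chi} B_j(\lambda)\sum_{q=1}^{k+1} \frac{C_q}{\ell_q^{2j}}.
\]
The decisive observation is that the $C_q$ from \defn{MPF} are precisely the Lagrange interpolation weights at the $k+1$ nodes $x_q\defeq 1/\ell_q^2$ evaluated at $x=0$, so $C_q=\prod_{j\ne q} x_j/(x_j-x_q)$. Exactness of Lagrange interpolation on polynomials of degree at most $k$, applied to $p(x)=x^j$, immediately yields $\sum_q C_q/\ell_q^{2j}=0$ for $j=1,\ldots,k$. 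This kills every term in the residual with $\chi\le j\le k$.

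The leading surviving term therefore sits at $j=k+1$. The Newton remainder formula gives $\sum_q C_q/\ell_q^{2(k+1)}=(-1)^k/\prod_q\ell_q^2$, a constant that is exponentially suppressed in $\gamma$ via $\ell_{k+1}=e^{\gamma(k+1)}$. Combining this with $B_{k+1}(\lambda)=O(\lambda^{2(k+1)+1})$ and uniform control of the higher-$j$ tails yields the claimed bound $O(\lambda^{2(k+\chi)+1})$. The step I expect to be the main obstacle is the bookkeeping of the higher-than-first-order terms generated by Duhamel's formula: $n$-fold cross products such as $(\lambda^{2\chi+1}/\ell^{2\chi})^n$ feed into the $\ell^{-2n\chi}$ coefficient at $\lambda$-order $n(2\chi+1)$, and one must argue that whenever $n\chi\le k$ they are eliminated by the same Vandermonde cancellation, while for $n\chi>k$ the combined $\lambda$- and $\gamma$-suppression keeps them within the target order.
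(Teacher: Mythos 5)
Your route---a modified-equation expansion of $S_\chi$ combined with Lagrange-interpolation cancellation---is genuinely different from the paper's, which simply invokes the sufficient condition of Blanes--Casas--Ros: the coefficient vector must solve a generalized Vandermonde system whose rows involve $\ell_q^{0},\ell_q^{-2\chi},\ell_q^{-2\chi-2},\ldots,\ell_q^{-2(k+\chi-1)}$, i.e.\ $\sum_q C_q\ell_q^{-2j}=0$ for $j$ in the \emph{shifted} window $\{\chi,\chi+1,\ldots,k+\chi-1\}$, and then cites an explicit inversion of that matrix. Your decomposition $S_\chi(\lambda/\ell)^\ell-U(\lambda)=\sum_{j\ge\chi}\ell^{-2j}B_j(\lambda)$ with $B_j(\lambda)=\Theta(\lambda^{2j+1})$ is the right skeleton and makes that sufficient condition transparent; moreover, the step you flagged as the main obstacle is actually fine: an $n$-fold Duhamel cross term carries $\ell^{-2J}$ with $J\ge n\chi$ at $\lambda$-order $2J+n>2J+1$, so it is dominated by, and annihilated together with, the first-order term of the same $\ell$-power.

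The genuine gap is in your final step. Lagrange exactness at the nodes $x_q=\ell_q^{-2}$ gives $\sum_q C_q\ell_q^{-2j}=0$ only for $j=1,\ldots,k$, so the first surviving term of your expansion is $j=k+1$; since $B_{k+1}(\lambda)=\Theta(\lambda^{2(k+1)+1})$ whenever $\chi\le k+1$, your argument delivers $O(\lambda^{2k+3})$, not $O(\lambda^{2(k+\chi)+1})$, and these coincide only for $\chi=1$. For $\chi\ge2$ the conditions you actually need are $j=\chi,\ldots,k+\chi-1$, which include $j=k+1,\ldots,k+\chi-1$; interpolation exactness on polynomials of degree at most $k$ does not supply these, and your own Newton-remainder identity $\sum_q C_q\ell_q^{-2(k+1)}=(-1)^k/\prod_q\ell_q^2\ne 0$ exhibits the failure explicitly. (The smallness of that constant in $\gamma$ does not help: it is independent of $\lambda$ and so cannot raise the power of $\lambda$.) To close the gap you must either restrict to $\chi=1$ or replace the Lagrange weights by the solution of the shifted Vandermonde system; note that your computation is also in tension with the paper's own assertion that the coefficients of \eq{cqgen} solve that shifted system for $\chi>1$, a point worth examining since the paper applies the lemma with $\chi=k$.
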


\begin{proof}
We follow the steps outlined in Chin's proof for the case where $\chi=1$~\cite{Chin10}.  As shown in~\cite{BCR99}, a sufficient condition for a multi-product formula of the form $M_{k,\chi}(\lambda)=\sum_p C_p S_\chi(\lambda/\ell_p)^{\ell_p}$ to satisfy $\norm{U(\lambda)-M_{k,\chi}(\lambda)}\in O(\lambda^{2(k+\chi)+1})$ is for $C$ to satisfy the following matrix equation:
\begin{equation}
\begin{pmatrix}
1&1&1&\cdots &1\\
\ell_1^{-2\chi}&\ell_2^{-2\chi}&\ell_3^{-2\chi}&\cdots&\ell_{k+1}^{-2\chi}\\
\ell_1^{-2\chi-2}&\ell_2^{-2\chi-2}&\ell_3^{-2\chi-2}&\cdots&\ell_{k+1}^{-2\chi-2}\\
\vdots&\vdots&\vdots&\ddots&\vdots\\
\ell_1^{-2(k+\chi-1)}&\ell_2^{-2(k+\chi-1)}&\ell_3^{-2(k+\chi-1)}&\cdots&\ell_{k+1}^{-2(k+\chi-1)}
\end{pmatrix}
\begin{pmatrix}C_1\\C_2\\C_3\\\vdots\\C_{k+1}\end{pmatrix}
= \begin{pmatrix} 1\\0\\0\\\vdots\\0\end{pmatrix}.
\end{equation}
This ensures that the sum of all $C_q$ is $1$ and the coefficients of all the error terms in the multi-product formula are zero up to $O(\lambda^{2(k+\chi)-1})$. Denoting the matrix in the above equation by $V$,  the vector $C$ is then the first column of $V^{-1}$. The matrix $V$ is a generalized Vandermonde matrix, which can be explicitly inverted~\cite{Moa03}.  The entries of $C$ correspond to~\eq{cqgen}, which  coincides with the values of $C_q$ given in~\eq{cq} for the values of $\ell_q$ in~\eq{lqDef}.  Therefore the result of \cite{BCR99} shows that these values of $C_q$ extrapolate an $O(\lambda^{2\chi+1})$ symmetric product formula into an $O(\lambda^{2(k+\chi)+1})$ multi-product formula, as claimed.
\end{proof}

The following upper bound on the coefficients of the multi-product formula will be useful.

\begin{lemma}\label{lem:cqbound}
If $e^{2\gamma (k+1)} \ge 2k^2$, then for all $1 \le q < k+1$, the coefficients $C_q$ from \defn{MPF} satisfy
\begin{align}
|C_q|\le \sqrt{2} \, k^{3/2} e^{2k(1+\log(\eta)/2-\gamma)}
\label{eq:cqexpansion8}
\end{align}
where
\begin{equation}
\eta \defeq \max_{\lambda\in [0,1)} \frac{\lambda^2}{(1+\lambda)^{1+\lambda}(1-\lambda)^{1-\lambda}} \approx 0.3081.
\label{eq:eta}
\end{equation}
\end{lemma}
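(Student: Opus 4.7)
The plan is to express $|C_q|$ in closed form using factorial identities, then apply Stirling's bound to isolate the exponential rate $\eta^k$ from~\eq{eta} and to control the polynomial prefactor.

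First I would simplify the product $\prod_{j=1,\,j\ne q}^{k} q^2/(q^2-j^2)$. Factoring $q^2-j^2=(q-j)(q+j)$ and evaluating the two resulting products by splitting the range at $j=q$ gives $\prod_{j\ne q}(q-j)=(-1)^{k-q}(q-1)!(k-q)!$ and $\prod_{j\ne q}(q+j)=(k+q)!/(2q\cdot q!)$, so the product has absolute value $2q^{2k}/[(k-q)!(k+q)!]$. The extra leading factor in \eq{cq} satisfies $|q^2/(q^2-e^{2\gamma(k+1)})|\le 2q^2/e^{2\gamma(k+1)}$: indeed the hypothesis $e^{2\gamma(k+1)}\ge 2k^2\ge 2q^2$ forces $e^{2\gamma(k+1)}-q^2\ge e^{2\gamma(k+1)}/2$. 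Combining yields the pre-Stirling bound $|C_q|\le 4q^{2k+2}/[e^{2\gamma(k+1)}(k-q)!(k+q)!]$.

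The core step is to apply Stirling's lower bound $n!\ge\sqrt{2\pi n}(n/e)^n$ and to introduce $\lambda=q/k$. For $q<k$, Stirling can be applied to both factorials, and the $k$th-power piece of the resulting bound collects as $\bigl(\lambda^2/[(1-\lambda)^{1-\lambda}(1+\lambda)^{1+\lambda}]\bigr)^k$, which is at most $\eta^k$ by definition. The surviving polynomial prefactor works out to $2\lambda^2 k/[\pi\sqrt{1-\lambda^2}]$, and I would bound it by $k^{3/2}$ by using $\sqrt{1-\lambda^2}\ge\sqrt{2k-1}/k$ (valid since $q\le k-1$) together with the elementary inequality $2k^2/[\pi\sqrt{2k-1}]\le k^{3/2}$ for $k\ge 1$.

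The main obstacle is the boundary case $q=k$, where $(k-q)!=1$ and the Stirling bound on $(k-q)!$ degenerates. I would handle it separately, applying Stirling only to $(2k)!$ to get $k^{2k+2}/(2k)!\le k^{3/2}e^{2k}/[2\sqrt\pi\cdot 4^k]$, and then absorbing the $4^{-k}$ into $\eta^k$ via the inequality $\eta>1/4$, which is clear from the stated numerical value and can be checked directly by evaluating the defining expression at any convenient $\lambda\in(0,1)$. Collecting the constants (at most $1$ in the bulk case and $2/\sqrt\pi$ in the boundary case, both below $\sqrt 2$) and replacing $e^{-2\gamma(k+1)}$ by the weaker $e^{-2k\gamma}$ then produces the stated bound $\sqrt 2\,k^{3/2}e^{2k(1+\log\eta/2-\gamma)}$.
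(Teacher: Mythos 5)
Your proposal is correct and takes essentially the same route as the paper's proof: the same closed-form evaluation of the product via $(q-j)(q+j)$, the same use of $e^{2\gamma(k+1)}\ge 2q^2$ to bound the leading factor, Stirling's lower bound on both factorials with the substitution $\lambda=q/k$ to extract $\eta^k$, and a separate treatment of $q=k$ (your $4^{-k}\le\eta^k$ is exactly the paper's comparison $-\log 2<\log(\eta)/2$). The only differences are cosmetic constants in the polynomial prefactor, and both land under $\sqrt{2}\,k^{3/2}$.
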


\begin{proof}
For any $q<k+1$, \defn{MPF} gives
\begin{align}
C_q
&= \frac{q^2}{q^2-e^{2\gamma (k+1)}} \prod_{j\in \{1,\ldots,k\}\setminus q} \frac{q^2}{q^2-j^2} \nn
&= \frac{q^2}{q^2-e^{2\gamma (k+1)}} \prod_{j\in \{1,\ldots,k\}\setminus q} \frac{q^2}{(q+j)(q-j)} \nn
&= \frac{q^2}{q^2-e^{2\gamma (k+1)}} q^{2(k-1)} \frac{2q \cdot q!}{(q+k)!} \frac{(-1)^{k-q}}{(q-1)!(k-q)!} \nn
&= \frac{(-1)^{k-q}2q^{2k+2}}{(k+q)!(k-q)! (q^2-e^{2\gamma (k+1)})}\label{eq:cqexpansion2}.
\end{align}
Using $e^{2\gamma (k+1)} \ge 2k^2 \ge 2q^2$, we have the bound
\begin{align}
|C_q| &\le \frac{4q^{2k+2}e^{-2\gamma (k+1)}}{(k+q)!(k-q)!}\label{eq:cqexpansion3}.
\end{align}

We proceed by using the lower bound \cite{AbSt64}
\begin{equation}
n!\ge \sqrt{2\pi n}\,n^{n} e^{-(n+1/13)}\label{eq:factorialbd}.
\end{equation}
We will use this bound differently to estimate $|C_q|$ for the cases where $q<k$ and $q=k$.  Using~\eq{factorialbd}, we
lower bound both factorial functions in~\eq{cqexpansion3} as follows:
\begin{align}
|C_q|  &\le \frac{4q^{2k+2}e^{-2\gamma(k+1)+2k+2/13}}{2\pi \sqrt{k}(k+q)^{k+q}(k-q)^{k-q}}
\label{eq:cqexpansion4}.
\end{align}
Here we have used the fact that $\sqrt{(k+q)(k-q)}\ge \sqrt{k}$ for $q\le k-1$.
Now introduce a parameter $\lambda$ such that $q=k\lambda$.  We simplify~\eq{cqexpansion4} using this substitution and
divide the numerator and denominator of~\eq{cqexpansion4} by $k^{2k}$ to find
\begin{align}
|C_q| &\le \frac{2k^{3/2}e^{2k(1-\gamma)-2\gamma+2/13}}{\pi}\left(\frac{\lambda^{2}}{(1+\lambda)^{1+\lambda}(1-\lambda)^{1-\lambda}} \right)^k\nonumber\\
&\le \frac{2k^{3/2}e^{2k(1-\gamma)-2\gamma+2/13}}{\pi}\eta^k
\label{eq:cqexpansion5}.
\end{align}
We then simplify this expression and find that
\begin{equation}
|C_q|\le\frac{2k^{3/2}e^{2k(1+\log(\eta)/2 -\gamma)}}{\pi e^{2\gamma-2/13}},
\end{equation}
which for $\gamma>0$ is bounded above by
\begin{align}
|C_q|\le {{k^{3/2}e^{2k(1+\log(\eta)/2-\gamma)}}}
< \sqrt{2} \, {{k^{3/2}e^{2k(1+\log(\eta)/2-\gamma)}}}.\label{eq:cqexpansion7}
\end{align}

Our bound for the case where $q=k$ is found using similar (but simpler) reasoning.  We first substitute $q=k$ into~\eq{cqexpansion3},
use the lower bound in~\eq{factorialbd} to remove the factorial function, and simplify the result to find
\begin{equation}
|C_{k}|\le 2{k^{3/2}e^{2k(1-\log(2)-\gamma)}}/\sqrt{\pi}
< \sqrt{2} \, {k^{3/2}e^{2k(1-\log(2)-\gamma)}},
\end{equation}
which is less than the value in~\eq{cqexpansion7} because $-0.6931 \approx -\log 2 < \log (\eta)/2\approx -0.5886$.
Therefore \eq{cqexpansion8} holds for all $q\le k$ as required.
\end{proof}

The value of $\gamma$ used in $M_{k,\chi}$ can be chosen to minimize the probability of a subtraction error.  The following lemma relates the value of $\kappa$ to $\gamma$, allowing us to use \thm{main:nonunitary} to find a value of $\gamma$ that ensures a sufficiently small probability of a subtraction error.

Henceforth we assume that $k\ge 1$.  This is because $k=0$ corresponds to an ordinary product formula and the bounds that we prove for multi-product formulas can tightened by excluding this case, which is also already well analyzed~\cite{BACS07,Suz91}.

\begin{lemma}\label{lem:1}
Let $M_{k,\chi}$ be a multi-product formula as in \defn{MPF} and let $\kappa$ be defined for $M_{k,\chi}$ as in \thm{main:nonunitary}. Then if $2k^2 \le e^{2\gamma (k+1)}$ and $k>0$, we have
\begin{equation}
\kappa \ge 2^{-1/2}e^{-2k(1+\log(\eta)/2-\gamma)-\log(k^{5/2})}\label{eq:lem1:kappabound}
\end{equation}
where $\eta$ is defined in \eq{eta}.
\end{lemma}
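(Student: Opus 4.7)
The plan is to exploit the identity $\sum_{q=1}^{k+1} C_q = 1$ (from Lemma~\ref{lem:MPFHigherOrder}) together with the uniform upper bound on $|C_q|$ established in Lemma~\ref{lem:cqbound}.

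First I would set $P \defeq \sum_{q:\, C_q > 0} C_q$ and $N \defeq \sum_{q:\, C_q < 0} |C_q|$, so that $P - N = 1$ and $\kappa = P/N = 1 + 1/N$. In particular $\kappa \ge 1/N$, so it suffices to produce a suitable upper bound on $N$.

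Next I would show that $C_{k+1} > 0$, which follows directly from \eq{cq}: under the hypothesis $e^{2\gamma(k+1)} \ge 2k^2 > j^2$ for $j \le k$, every factor in $C_{k+1} = \prod_{j=1}^k e^{2\gamma(k+1)}/(e^{2\gamma(k+1)} - j^2)$ is positive. Consequently $C_{k+1}$ never contributes to $N$, and $N \le \sum_{q=1}^{k} |C_q|$.

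Applying Lemma~\ref{lem:cqbound} termwise then gives
\begin{equation}
N \le \sum_{q=1}^{k} |C_q| \le k \cdot \sqrt{2}\, k^{3/2} e^{2k(1+\log(\eta)/2-\gamma)} = \sqrt{2}\, k^{5/2} e^{2k(1+\log(\eta)/2-\gamma)},
\end{equation}
so that
\begin{equation}
\kappa \ge \frac{1}{N} \ge \frac{1}{\sqrt{2}\, k^{5/2} e^{2k(1+\log(\eta)/2-\gamma)}} = 2^{-1/2} e^{-2k(1+\log(\eta)/2-\gamma) - \log(k^{5/2})},
\end{equation}
as claimed. There is no real obstacle here since all the heavy lifting — the bound on $|C_q|$ — has been done in Lemma~\ref{lem:cqbound}; the only nontrivial observation is the sign determination of $C_{k+1}$, which lets us restrict the sum defining $N$ to indices $q \le k$ where the Lemma~\ref{lem:cqbound} estimate applies.
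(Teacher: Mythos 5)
Your proof is correct and follows essentially the same route as the paper's: reduce to $\kappa \ge 1/\Sigma_-$, note that $C_{k+1}>0$ so the negative mass lives in $q\le k$, and bound $\Sigma_- \le \sum_{q=1}^{k}|C_q|$ termwise via \lem{cqbound}. The only cosmetic difference is that you obtain $\kappa \ge 1/N$ from $\sum_q C_q = 1$ (so $\kappa = 1+1/N$), while the paper gets $\Sigma_+ \ge 1$ from the observation that $C_{k+1}\ge 1$; both are valid and yield identical bounds.
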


\begin{proof}
According to \thm{main:nonunitary}, we have
$
\kappa = {\Sigma_+}/{\Sigma_-}
$,
where $\Sigma_+$ is the sum of all $C_q$ with positive coefficients and $\Sigma_-$ is the absolute value of the corresponding negative sum.  A lower bound on $\kappa$ is therefore found by dividing a lower bound on $\Sigma_+$ by an upper bound on $\Sigma_-$.

Using the expression for $C_q$ from \defn{MPF}, we have
\begin{align}
C_{k+1}
= \prod_{j=1}^{k} \frac{e^{2\gamma (k+1)}}{e^{2\gamma (k+1)}-j^2}
= \prod_{j=1}^{k} \frac{1}{1-j^2e^{-2\gamma (k+1)}}.
\label{eq:lem5:rhs19}
\end{align}
The denominators on the right hand side of~\eq{lem5:rhs19} are positive under the assumption that $2k^2\le e^{2\gamma (k+1)}$, which ensures that $k^2e^{-2\gamma (k+1)}<1$ and simplifies the subsequent results of \cor{gamma}.  We also have $C_{k+1}\ge 1$ because each denominator is less than $1$.  Since $C_{k+1} > 0$, we have $\Sigma_+\ge C_{k+1} \ge 1$, and therefore
\begin{equation}
\kappa \ge \frac{1}{\Sigma_-}.\label{eq:lem2:sigmaminus}
\end{equation}

Next we provide an upper bound for $\Sigma_-$.  Since $C_{k+1} > 0$, we have
\begin{equation}
\Sigma_- \le \sum_{q=1}^{k} |C_q|.\label{eq:lem5:sigma-bd}
\end{equation}
An upper bound for $\Sigma_-$ can then be obtained directly from upper bounds for $\max_{q<k+1}|C_q|$.

Using \lem{cqbound}, we have
\begin{equation}
\Sigma_-\le \sum_{q=1}^{k} {{\sqrt{2}\,k^{3/2}e^{2k(1+\log(\eta)/2-\gamma)}}}< \sqrt{2}\,k^{5/2}e^{2k(1+\log(\eta)/2-\gamma)}.\label{eq:lem2:sumabsCq}
\end{equation}
We substitute this inequality into~\eq{lem2:sigmaminus} to obtain
\begin{equation}
\kappa\ge 2^{-1/2}k^{-5/2}e^{-2k(1+\log(\eta)/2-\gamma)}
\end{equation}
as claimed.
\end{proof}

In fact, the bound of \lem{1} is nearly tight, in that $\kappa$ decays exponentially with $k$ if $\gamma < 1+\log(\eta)/2$, as we discuss in more detail below.  Thus the success probability of our algorithm decays exponentially if $\gamma$ is too small.  Consequently, we will find that unlike the classical case, our quantum algorithm does not provide poly-logarithmic error scaling.

The following corollary provides a sufficient value of $\gamma$ to ensure that the probability of our algorithm making a subtraction error is small. 

\begin{corollary}\label{cor:gamma}
Let $M_{k,\chi}$ be a multi-product formula as in \defn{MPF}, let $\kappa$ be defined for $M_{k,\chi}$ as in \thm{main:nonunitary}, and let $\delta\le 1$.  Furthermore, suppose $k> 0$ and
$$
  \gamma\ge 1+\frac{\log(\eta)}{2}+\frac{1}{2k} \log\left(\frac{(2k)^{\frac{5}{2}}}{\delta}\right).
$$
Then $P_-\le \delta$ and $k^2e^{-2\gamma (k+1)}\le 1/2$.
\end{corollary}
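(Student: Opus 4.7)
The plan is to read off the corollary essentially as an algebraic specialization of \lem{1}, once we first establish that the hypothesis of that lemma (namely $2k^2 \le e^{2\gamma(k+1)}$) holds for the $\gamma$ given. So the natural ordering is: (i) verify $k^2e^{-2\gamma(k+1)} \le 1/2$, (ii) invoke \lem{1} to lower bound $\kappa$, (iii) convert this into an upper bound on $P_-$ via the estimate from \thm{main:nonunitary}.

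For step (i), I would just substitute the assumed lower bound on $\gamma$ into $2\gamma(k+1)$ and compare against $\log(2k^2)$. Since $1+\log(\eta)/2 \approx 0.4114 > 0$ and $\frac{k+1}{2k}\log((2k)^{5/2}/\delta) \ge \frac{5(k+1)\log 2}{4k}$ (using $\delta\le 1$ and $k\ge 1$), each of the three contributions to $2\gamma(k+1)$ is nonnegative, and a quick check at $k=1$ together with the observation that the linear-in-$k$ contribution $2(k+1)(1+\log\eta/2)$ dominates $\log(2k^2)=\log 2 + 2\log k$ for $k\ge 1$ gives the inequality. This is the kind of routine estimation I would do but not belabor.

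For steps (ii)--(iii), the choice of $\gamma$ is engineered so that \lem{1} telescopes cleanly. Substituting
$$\gamma - 1 - \tfrac{1}{2}\log\eta = \tfrac{1}{2k}\log\!\left(\tfrac{(2k)^{5/2}}{\delta}\right)$$
into the bound $\kappa \ge 2^{-1/2} e^{-2k(1+\log(\eta)/2-\gamma)-\log(k^{5/2})}$ gives $-2k(1+\log\eta/2-\gamma) = \log((2k)^{5/2}/\delta)$, so the exponent becomes $\log(2^{5/2}/\delta)$, yielding $\kappa \ge 2^{-1/2}\cdot 2^{5/2}/\delta = 4/\delta$. Then \thm{main:nonunitary} gives $P_- \le 4\kappa/(\kappa+1)^2 \le 4/\kappa \le \delta$, using the elementary bound $\kappa/(\kappa+1)^2 \le 1/\kappa$ for $\kappa>0$.

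There is no real obstacle here; the corollary is essentially a repackaging of \lem{1} with a specific $\gamma$ that makes the bound on $\kappa$ take the transparent form $\kappa \ge 4/\delta$. The only subtle point is to make sure the verification in step (i) does not rely circularly on the conclusion of step (iii), but since step (i) follows from the explicit form of $\gamma$ alone, this is automatic. I would write the proof as two short paragraphs, one per claim, with step (i) placed first since \lem{1} requires it as a hypothesis.
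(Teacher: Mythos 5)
Your proposal is correct and takes essentially the same approach as the paper: it combines the bound $P_-\le 4/\kappa$ from \thm{main:nonunitary} with the lower bound on $\kappa$ from \lem{1}, the choice of $\gamma$ being exactly what makes that bound collapse to $\kappa\ge 4/\delta$, and it separately verifies $k^2e^{-2\gamma(k+1)}\le 1/2$ (the paper instead does this via a monotonicity-in-$k$ argument). Your ordering, which checks the hypothesis of \lem{1} before invoking it, is marginally cleaner than the paper's, which verifies that condition only at the end, but the content is identical.
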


\begin{proof}
Without loss of generality, we can take $\kappa\ge 1$ for our subtraction step because $\sum_q C_q =1$ and hence $\kappa=\Sigma_+/\Sigma_- \ge 1$.  This observation and
the result of \thm{main:nonunitary} imply that the probability of failing to perform the subtraction step in our implementation of $M_{k,\chi}$ satisfies
\begin{equation}
P_-\le 4/\kappa.\label{eq:P-bd}
\end{equation}
Eq.~\eq{P-bd} and the bounds on $\kappa$ in \lem{1} give $P_-\le \delta$ provided
\begin{equation}
4\sqrt{2}\,e^{2k(1+\log(\eta)/2-\gamma)+\log(k^{5/2})}\le \delta.\label{eq:lemgamma:errbd}
\end{equation}
We obtain our sufficient value of $\gamma$ by solving~\eq{lemgamma:errbd} for $\gamma$, giving
\begin{equation}
\gamma\ge 1+\log(\eta)/2+\frac{1}{2k} \log\left(\frac{(2k)^{\frac{5}{2}}}{\delta}\right).\label{eq:gammabd}
\end{equation}
Eq.~\eq{gammabd} also implies that $k^2e^{-2\gamma (k+1)}\le 1/2$.  For $0<\delta\le 1$ and $\gamma$ saturating~\eq{gammabd}, it is easy to see that $k^2 e^{-2\gamma (k+1)}$ is a monotonically decreasing function of $k$, and therefore achieves its maximum value at $k=1$, the smallest possible value of $k$.  We find that $k^2e^{-2\gamma (k+1)}<1/2$ at $k=1$ for $\delta=1$, and therefore the condition $k^2e^{-2\gamma (k+1)}\le 1/2$ is automatically implied by our choice of $\gamma$ in~\eq{gammabd}.
\end{proof}

\begin{figure}
  \capstart
	\includegraphics[width=0.65\linewidth]{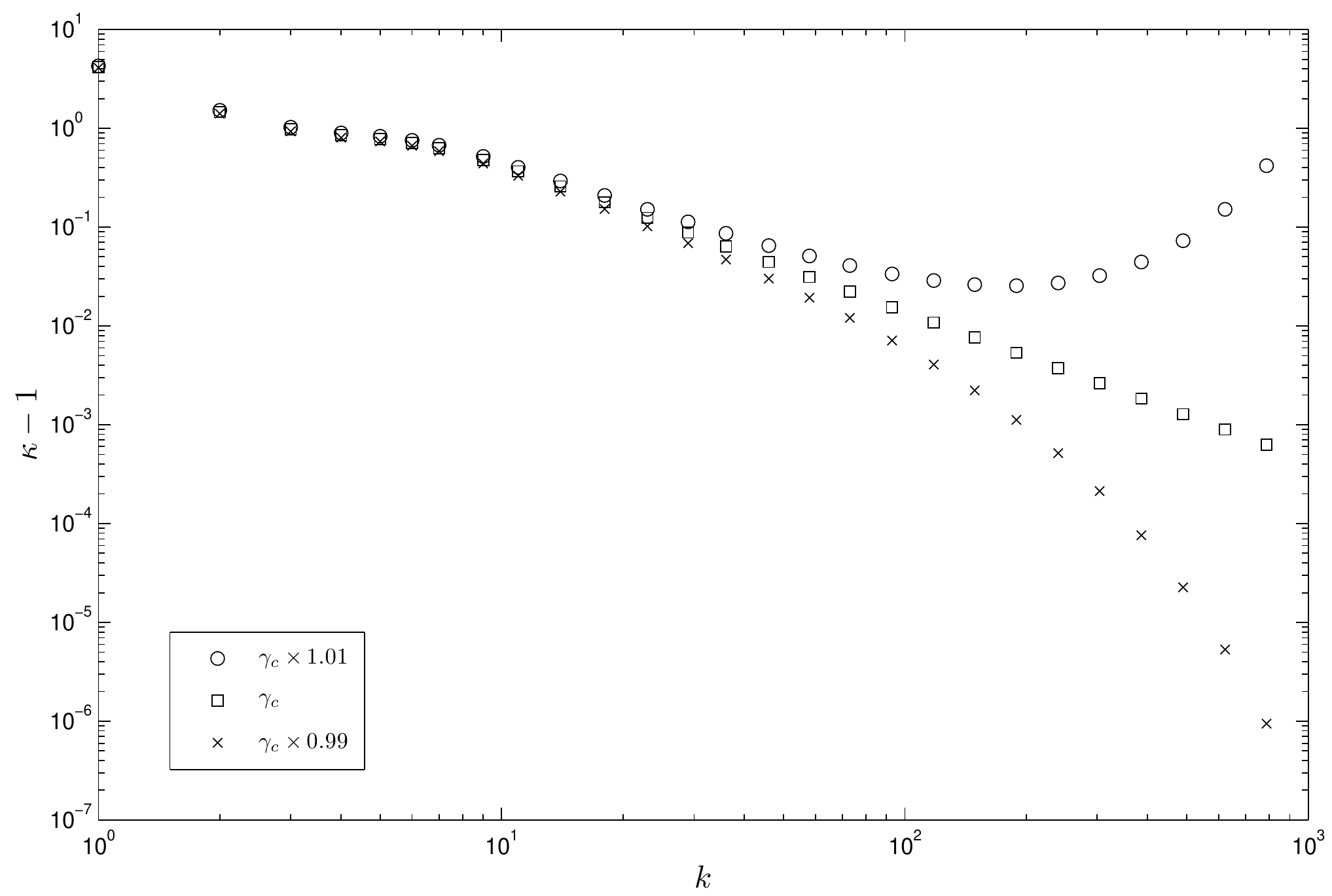}
	\caption{Scaling of $\kappa$ with $k$ for three values of $\gamma$ centered around $\gamma_c \defeq 1+\log(\eta)/2$.  The data show that $\kappa$ approaches $1$ polynomially quickly if $\gamma=\gamma_c$, whereas a slight increase in $\gamma$ causes $\kappa$ to grow exponentially and a slight decrease causes $\kappa$ to converge to $1$ exponentially with $k$.
	}
\label{fig:etadata}
\end{figure}

The value of $\gamma$ given by \cor{gamma} is tight up to $O(k^{-1} \log k)$.  This is illustrated in \fig{etadata}, which shows $\kappa$ as a function of $k$ for $\gamma=\gamma_c\defeq 1+\log(\eta)/2$ and two slightly perturbed values of $\gamma$ centered around $\gamma_c$. We see that small deviations away from $\gamma=\gamma_c$ lead to either exponential growth of $\kappa$ or exponential convergence of $\kappa$ to $1$.  Thus our lower bound for $\gamma$ cannot be significantly improved.

%%%%%%%%%%%%%%%%%%%%%%%%%%%%%%%%%%%%%%%%%%%%%%%%%%%%%%%%%%%%%%%%%%%%%%%%%%%%%%%%

\section{Analysis of Simulation and Errors in Multi-product Formulas\label{sec:mpferror}}

The results of the previous section show how $\kappa$ scales with the number of terms in the multi-product formula used in the simulation. We now expand on these results by bounding the approximation errors incurred by using the multi-product formula.  We also present an error correction method to ensure that our implementation fails at most a constant fraction of the time.  Our error bounds are established as follows.  First, we estimate the error in multi-product formulas that utilize high-order Lie--Trotter--Suzuki formulas.  Second, we discuss how these multi-product formulas are implemented.  Finally, we estimate the inversion error for the resulting multi-product formulas and bound the average error resulting from a given step.

\begin{lemma}\label{lem:MPFSuzError}
Let $M_{k,k}(\lambda)$ satisfy \defn{MPF} for evolution time $\lambda\ge 0$, let $|C_q|\le 2$ for all $q=1,\ldots,k+1$, and let $h\lambda\le \frac{3\log(2)}{4mk(5/3)^{k-1}}$.  Then
\begin{equation}
\norm{U(\lambda)-M_{k,k}(\lambda)} \le (2m(5/3)^{k-1}h \lambda)^{4k+1}.
\end{equation}
\end{lemma}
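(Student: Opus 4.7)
The key idea is to exploit the $(4k+1)$-order cancellation from \lem{MPFHigherOrder} quantitatively, by applying Cauchy's estimates to each individual Suzuki approximation. Write $f_q(\lambda)\defeq S_k(\lambda/\ell_q)^{\ell_q}-U(\lambda)$, so that $F(\lambda)\defeq M_{k,k}(\lambda)-U(\lambda)=\sum_q C_q f_q(\lambda)$. Each $f_q$ is an operator-valued entire function of $\lambda$ that vanishes at the origin to order $2k+1$ (the defining property of the symmetric formula $S_k$), while \lem{MPFHigherOrder} forces the weighted sum $F$ to vanish to order $4k+1$ via the cancellation arranged by the choice of $C_q$.

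A naive triangle inequality applied term-by-term to $F$ would only recover $O(\lambda^{2k+1})$, so the goal is to extract the cancellation quantitatively. I would apply Cauchy's estimates to the Taylor coefficients of each $f_q$ on a disk of radius $R>\lambda$, noting that the $j$th Taylor coefficient of $F$ vanishes for $2k+1\le j\le 4k$, so only the tail starting at $j=4k+1$ survives. Combining the termwise triangle inequality with Cauchy's bound $\|[\lambda^j]f_q\|\le\sup_{|\mu|=R}\norm{f_q(\mu)}/R^j$ and summing the resulting geometric series yields
\begin{equation*}
\norm{F(\lambda)}\le\Biggl(\sum_q|C_q|\sup_{|\mu|=R}\norm{f_q(\mu)}\Biggr)\cdot\frac{(\lambda/R)^{4k+1}}{1-\lambda/R}.
\end{equation*}
The supremum is then controlled by invoking a standard quantitative Lie--Trotter--Suzuki error bound of the form $\norm{S_k(\tau)-e^{-iH\tau}}\le(2mh(5/3)^{k-1}\tau)^{2k+1}$ (valid for sufficiently small argument), and using subadditivity over $\ell_q$ steps of size $\mu/\ell_q$ to obtain $\norm{f_q(\mu)}\le\ell_q^{-2k}(2mh(5/3)^{k-1}|\mu|)^{2k+1}$.

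The remaining step is to choose $R$ proportional to $1/(mh(5/3)^{k-1})$ so that the Suzuki smallness condition holds on the disk of radius $R$ and, simultaneously, so that the hypothesis $h\lambda\le 3\log(2)/(4mk(5/3)^{k-1})$ keeps $\lambda/R$ bounded away from $1$; then the geometric factor $1/(1-\lambda/R)$ is $O(1)$. After substituting the Suzuki bound, using $|C_q|\le 2$, and crudely bounding $\sum_q\ell_q^{-2k}$ by an absolute constant, the expression collapses to the claimed $(2m(5/3)^{k-1}h\lambda)^{4k+1}$. I expect the main obstacle to be careful constant bookkeeping: the invoked Suzuki bound must match the $(5/3)^{k-1}$ factor appearing in the statement (rather than the more common $5^{k-1}$), and the radius $R$ must be chosen so that the smallness threshold, the $1/(1-\lambda/R)$ factor, and the various combinatorial constants balance cleanly against the precise numerical threshold $3\log(2)/(4mk(5/3)^{k-1})$ in the hypothesis.
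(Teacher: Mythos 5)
Your high-level decomposition matches the paper's: \lem{MPFHigherOrder} kills all Taylor orders up to $4k$, so only the order-$(4k+1)$ tails of the individual summands need to be bounded, and these are handled term by term with the triangle inequality. Where you diverge is in how each tail is controlled. You use Cauchy's estimates on a complex disk of radius $R$; the paper instead majorizes each operator Taylor coefficient of $\prod_\ell e^{-ia_\ell H_\ell\lambda}$ directly by the corresponding coefficient of the scalar series $\exp(\sum_\ell|a_\ell|h\lambda)$ (its \lem{Rl}), so that each tail $\Rem_{4k}$ is bounded by $\frac{x^{4k+1}}{(4k+1)!}e^{x}$ with $x=\frac{4}{3}mk(5/3)^{k-1}h\lambda$. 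That $(4k+1)!$ is not cosmetic: via $(4k+1)!\ge k^{4k+1}5!$ it supplies the $k^{-(4k+1)}$ that cancels the $k^{4k+1}$ hidden in $x$, and this cancellation is exactly what yields the base $2m(5/3)^{k-1}h$ rather than $2mk(5/3)^{k-1}h$ in the final bound.

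The gap is that the Cauchy step cannot reproduce this cancellation, so the final ``collapses to the claimed bound'' step does not go through as described. To convert $\sup_{|\mu|=R}\norm{f_q(\mu)}\cdot(\lambda/R)^{4k+1}$ into $(c\lambda)^{4k+1}$ with $c=2m(5/3)^{k-1}h$ you need $cR\gtrsim 1$; but the $k$-free Suzuki bound you invoke is only valid for arguments of size $O(1/(kc))$, a factor of $k$ too small. On a disk of radius $\sim 1/c$ the honest estimate for $\norm{S_k(\mu)-e^{-iH\mu}}$ (obtained from the same remainder analysis) has the form $\frac{(2kcR/3)^{2k+1}}{(2k+1)!}e^{2kcR/3}$, which grows like $C^{k}$ for any fixed $cR\ge 1$ rather than staying below $\ell_q^{-2k}(cR)^{2k+1}$. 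Combined with the prefactor $\sum_q|C_q|\le 2(k+1)$ and the factor $1/(1-\lambda/R)$ --- which actually diverges at $k=1$ if $R=1/c$, since the hypothesis permits $c\lambda=\tfrac{3}{2}\log 2>1$ --- this leaves an uncancelled factor exponential in $k$ that is not absorbed by the constant $2$ precisely in the small-$k$ regime the lemma targets. Two further loose ends: the subadditivity $\norm{A^{\ell}-B^{\ell}}\le\ell\,\norm{A-B}$ requires $\norm{A},\norm{B}\le 1$, which fails off the real axis where the exponentials are no longer unitary; and since the per-step sup must be derived from a remainder estimate anyway, the Cauchy detour is pure loss --- bounding $\Rem_{4k}\bigl(S_k(\lambda/\ell_q)^{\ell_q}\bigr)$ directly, as the paper does, is both simpler and sharp enough to deliver the stated constant.
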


The proof of \lem{MPFSuzError} requires upper bounds on the remainder terms of Taylor series expansions.  Let $\Rem_\ell(f)$ denote the remainder term of the Taylor series of a function $f$ truncated at order $\ell$.  The following lemma bounds the remainder term for an operator exponential.

\begin{lemma}\label{lem:Rl}
Let $a_j \in \R$ for $j=1,\ldots,M$ and suppose $\norm{H_j} \le h$.  Then
\begin{equation}
\Norm{\Rem_{\ell}\left(\prod_{j=1}^M e^{-i a_j H_j t}\right)} \le \frac{(\sum_{j=1}^M |a_j| h t)^{\ell+1}}{(\ell+1)!}\exp\left(\sum_{q=1}^M |a_q| h t\right).
\end{equation}
\end{lemma}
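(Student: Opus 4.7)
The plan is to expand the product of exponentials as a multi-indexed power series in $t$, bound it term-by-term using operator-norm submultiplicativity, and then regroup the tail using the multinomial theorem.

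First I would write
\[
\prod_{j=1}^{M} e^{-i a_j H_j t}=\sum_{n_1,\ldots,n_M\ge 0}\,\prod_{j=1}^{M}\frac{(-i a_j H_j t)^{n_j}}{n_j!},
\]
and observe that the Taylor remainder at order $\ell$ (with respect to $t$) is exactly the sub-sum over multi-indices with total degree $N\defeq n_1+\cdots+n_M\ge\ell+1$, since a term in the multi-index sum is homogeneous of degree $N$ in $t$. Applying the triangle inequality for the operator norm together with submultiplicativity and the bound $\norm{H_j}\le h$ shows that each term contributes at most $\prod_j b_j^{n_j}/n_j!$, where $b_j\defeq |a_j| h t$.

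Next I would group the resulting scalar bound by total degree and apply the multinomial identity $\sum_{n_1+\cdots+n_M=N}\prod_j b_j^{n_j}/n_j!=B^N/N!$, with $B\defeq\sum_j |a_j| h t$. This collapses the multi-indexed sum into the ordinary scalar tail $\sum_{N\ge\ell+1}B^N/N!$. Factoring out the leading term and using the elementary inequality $(\ell+1)!/(\ell+1+k)!\le 1/k!$ for $k\ge 0$ bounds the remaining series by $e^B$, yielding
\[
\Norm{\Rem_\ell\Bigl(\prod_j e^{-i a_j H_j t}\Bigr)}\le \frac{B^{\ell+1}}{(\ell+1)!}\,e^B,
\]
which is the claimed estimate.

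The main obstacle is essentially bookkeeping: one must be careful that $\Rem_\ell$ is taken with respect to the single scalar variable $t$, so that the correct cutoff on multi-indices is $\sum_j n_j\ge\ell+1$ (not $n_j\ge \ell+1$ for each $j$ separately). Once this is set up correctly, the only nontrivial ingredient is the multinomial identity that converts a sum over multi-indices of fixed total degree into the $N$-th power of the total weight, after which the bound follows from a standard tail estimate for the exponential series.
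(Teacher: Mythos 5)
Your proposal is correct and follows essentially the same route as the paper: bound the remainder term-by-term via the triangle inequality and submultiplicativity, replace each $H_j$ by $h$ so that the product of operator series is dominated by the scalar series of $\exp(\sum_j |a_j| h t)$, and then estimate the exponential tail by its leading term times $e^B$. The only cosmetic difference is that you make the multinomial regrouping explicit, whereas the paper compresses that step into the identity $\prod_j e^{|a_j|ht} = e^{\sum_j |a_j|ht}$ and applies $\Rem_\ell$ to the result.
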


\begin{proof}
Using the triangle inequality and sub-multiplicativity of the norm,
we find
\begin{align}
\Norm{\Rem_{\ell}\left(\prod_{j=1}^M e^{-i a_j H_j t}\right)}
&= \Norm{\Rem_{\ell}\left(\prod_{j=1}^M \left( \sum_{p=0}^\infty (-i a_j H_j t)^p/p!\right)\right)} \\
&\le \Rem_{\ell}\left(\prod_{j=1}^M \left( \sum_{p=0}^\infty (|a_j| h t)^p/p!\right)\right) \nn
&= \Rem_{\ell}\left(\exp\left(\sum_{j=1}^M|a_j| h t \right)\right) \nn
&=\sum_{p=\ell+1}^\infty \frac{\left(\sum_{j=1}^M|a_j| h t \right)^p}{p!}\nonumber\\
&\le\frac{\left(\sum_{j=1}^M|a_j| h t \right)^{\ell+1}}{(\ell+1)!}\exp\left(\sum_{j=1}^M|a_j| h t  \right)
\end{align}
as claimed.
\end{proof}

Now we are ready to prove \lem{MPFSuzError}.

\begin{proofof}{\lem{MPFSuzError}}
\lem{MPFHigherOrder} implies that
\begin{equation}
\norm{M_{k,k}(\lambda)-U(\lambda)} \in O(\lambda^{4k+1}),
\end{equation}
so the approximation error is entirely determined by the terms of order $\lambda^{4k+1}$ in $M_{k,k}$ and $U$.  If we remove the terms in the Taylor series of $M_{k,k}$ and $U$ that cancel, the remainders, $\Rem_{4k}(M_{k,k}(\lambda))$ and $\Rem_{4k}(U(\lambda))$, determine the error via
\begin{align}
\norm{M_{k,k}(\lambda)-U(\lambda)}
&=\norm{\Rem_{4k}(M_{k,k}(\lambda))-\Rem_{4k}(U(\lambda))} \nn
&\le \norm{\Rem_{4k}(M_{k,k}(\lambda))}+\norm{\Rem_{4k}(U(\lambda))}.
\label{eq:lem8:triangle}
\end{align}

\lem{Rl} implies that
\begin{align}
\norm{\Rem_{4k}(U(\lambda))}
&\le \frac{(mh\lambda)^{4k+1}}{(4k+1)!}e^{mh\lambda} \nn
&< \left(\frac{4}{3}m (5/3)^{k-1}  h \lambda\right)^{4k+1}.\label{eq:R4kU}
\end{align}
The second inequality in~\eq{R4kU} follows from the assumption that $h\lambda\le \frac{3\log(2)}{4mk(5/3)^{k-1}}$, which implies that $\exp(mh\lambda)/(4k+1)!\le 2^{3/4}/5!< 1$.

The definition of $M_{k,\chi}$ implies
\begin{align}
\norm{\Rem_{4k}(M_{k,k}(\lambda))}
&\le \sum_{q=1}^{k+1} |C_q|\Norm{\Rem_{4k}(S_k(\lambda/\ell_q)^{\ell_q})}.
\label{eq:lem8:r4kMkk}
\end{align}
Thus we upper bound $\norm{\Rem_{4k}(S_k(\lambda/p)^{p})}$.
This bound follows similar logic to the bound for $U(\lambda)$, but the calculation is slightly more complicated because $S_k$ is the product of many exponentials.  Specifically,
\begin{equation}
S_k(\lambda/p)=\prod_{\ell=1}^{2m5^{k-1}} e^{-iH_{j_\ell} q_{k,\ell} \lambda/p},
\end{equation}
where $q_{k,\ell}$ is the ratio between $\lambda/p$ and the duration of the $\ell^{\rm th}$ exponential in $S_k$.
\lem{Rl} gives
\begin{equation}
\norm{\Rem_{4k}(S_k(\lambda/p)^p)} \le \frac{(2m5^{k-1} \max q_{k,\ell} h \lambda)^{4k+1}}{(4k+1)!}e^{2m5^{k-1} \max q_{k,\ell} h \lambda}.\label{eq:lem8:R4kbd2}
\end{equation}
Using the upper bound $q_{k,\ell}\le {2k}/{3^k}$ from Appendix A of~\cite{WBHS10}, we have
\begin{equation}
\norm{\Rem_{4k}(S_k(\lambda/p)^p)} \le \frac{(\frac{4}{3}mk (5/3)^{k-1}  h \lambda)^{4k+1}}{(4k+1)!} e^{\frac{4}{3}mk (5/3)^{k-1}  h \lambda}.\label{eq:lem8:51}
\end{equation}
This bound can be simplified using $(4k+1)!\ge k^{4k+1}5!$ for $k\ge 1$ (a consequence of~\eq{factorialbd}) and the hypothesis that $\frac{4}{3}mk (5/3)^{k-1}  h \lambda\le \log(2)$, giving
\begin{equation}
\norm{\Rem_{4k}(S_k(\lambda/p)^p)} \le \frac{2}{5!}\left(\frac{4}{3}m (5/3)^{k-1}  h \lambda\right)^{4k+1}.\label{eq:Skremainder}
\end{equation}
Using this result and the assumption that $|C_q|\le 2$ in \eq{lem8:r4kMkk} gives
\begin{equation}
\norm{\Rem_{4k}(M_{k,k}(\lambda))}
\le \frac{8(k+1)}{5!}\left(\frac{4}{3}m (5/3)^{k-1}  h \lambda\right)^{4k+1}\label{eq:lem8:r4kMkk2}.
\end{equation}

Combining~\eq{lem8:triangle},~\eq{R4kU}, and~\eq{lem8:r4kMkk2} gives
\begin{align}
\norm{U(\lambda)-M_{k,k}(\lambda)} &\le \left(1+\frac{8(k+1)}{5!}\right)\left(\frac{4}{3}m (5/3)^{k-1}  h \lambda\right)^{4k+1}\nonumber\\
&\le (2m (5/3)^{k-1}  h \lambda)^{4k+1},
\label{eq:lem8:final}
\end{align}
proving the lemma.
\end{proofof}

A useful consequence of performing $M_{k,k}$ using a single subtraction step is that if a subtraction error occurs, the simulator performs the operation
\begin{equation}
E_k(\lambda) \colon \ket{\psi} \mapsto \frac{\sum_{q} |C_{q}| S_k(\lambda/\ell_q)^{\ell_q}\ket{\psi}}{\norm{\sum_{q} |C_q| S_{k}(\lambda/\ell_q)^{\ell_q}\ket{\psi}}}.
\label{eq:ek}
\end{equation}
This error operation can be approximately corrected because, as Blanes et al.\ proved~\cite[Theorem 1]{BCR99}, \begin{equation}E_k(-\lambda)E_k(\lambda)=\openone+O(\lambda^{4k+2}).\label{eq:blanesinv}\end{equation}
Since the coefficients $|C_q|$ are all positive, \thm{main:nonunitary} shows that the approximate correction operation $E_k(-\lambda)$ can be performed with success probability close to $1$ provided $\Delta$ is small. The following lemma states that the error incurred by approximately correcting subtraction errors is at most equal to our upper bound for the approximation error for $M_{k,k}(\lambda)$.
\begin{lemma}\label{lem:ekinv}
Let $E_k(\lambda)$ act as in \eq{ek}, where $C_q$ and $\ell_q$ are given in \defn{MPF}.  If $2mk(5/3)^{k-1}h\lambda\le 1/2$, then \rm 
\begin{equation}
\max_{\ket{\psi}}\norm{\left(\openone -E_k(-\lambda)E_k(\lambda)\right)\ket{\psi}}
\le \left(2mk(5/3)^{k-1}h \lambda\right)^{4k+2}.
\end{equation}
\end{lemma}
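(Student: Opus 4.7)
The plan is to reduce this nonlinear bound to a linear-operator estimate on the unnormalized operator $\tilde E_k(\lambda) \defeq \sum_q |C_q|\, S_k(\lambda/\ell_q)^{\ell_q}$, then extract a quantitative version of the Blanes--Casas--Ros identity~\eq{blanesinv}.  The scalar normalizations in the definition of $E_k$ cancel across composition, so
\begin{equation*}
E_k(-\lambda)\,E_k(\lambda)\ket{\psi}
=\frac{\tilde E_k(-\lambda)\,\tilde E_k(\lambda)\ket{\psi}}{\norm{\tilde E_k(-\lambda)\,\tilde E_k(\lambda)\ket{\psi}}}.
\end{equation*}
With $N\defeq\sum_q |C_q|$, it suffices to bound $\epsilon \defeq \norm{\tilde E_k(-\lambda)\,\tilde E_k(\lambda)/N^2 -\openone}$; a standard normalization calculation using $\norm{(\openone+W)\ket\psi}\in[1-\norm{W},1+\norm{W}]$ then yields $\norm{(\openone - E_k(-\lambda)E_k(\lambda))\ket\psi}\le 2\epsilon/(1-\epsilon)$, which is small enough under the hypothesis.

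To estimate $\epsilon$, I would exploit the symmetry and unitarity of $S_k$: these give $\tilde E_k(-\lambda)=\tilde E_k(\lambda)^\dagger$ and make each $V_q(\lambda)\defeq U(-\lambda)\,S_k(\lambda/\ell_q)^{\ell_q}$ unitary.  Factoring $\tilde E_k(\lambda)=U(\lambda)Q(\lambda)$ with $Q\defeq\sum_q|C_q|V_q = N\openone + R$ yields
\begin{equation*}
\tilde E_k(-\lambda)\,\tilde E_k(\lambda)
= Q^\dagger Q
= N^2\openone + N(R+R^\dagger) + R^\dagger R.
\end{equation*}
The quantitative form of~\eq{blanesinv} in this setting is the elementary identity $V_q + V_q^\dagger - 2\openone = -(V_q-\openone)(V_q^\dagger-\openone)$, which follows by expanding the right side and using $V_qV_q^\dagger=\openone$.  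It gives $\norm{V_q+V_q^\dagger-2\openone}\le \norm{V_q-\openone}^2$, hence $\norm{R+R^\dagger}\le N\max_q\norm{V_q-\openone}^2$ and $\norm{R^\dagger R}\le N^2\max_q\norm{V_q-\openone}^2$, so $\epsilon \le 2\max_q\norm{S_k(\lambda/\ell_q)^{\ell_q} - U(\lambda)}^2$.

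Finally, I would bound $\norm{S_k(\lambda/\ell_q)^{\ell_q}-U(\lambda)}$ via~\lem{Rl}, mirroring the Suzuki bookkeeping from the proof of~\lem{MPFSuzError} but applied to a single product formula: since $S_k(\lambda/\ell_q)^{\ell_q}$ matches $U(\lambda)$ in Taylor series only up to order $\lambda^{2k}$, the relevant remainders are $\Rem_{2k}(\cdot)$, and the Suzuki structure bounds $\sum_j|a_j|\le \frac{4mk(5/3)^{k-1}}{3}$ via the estimate $q_{k,\ell}\le 2k/3^k$.  Under the hypothesis $2mk(5/3)^{k-1}h\lambda\le 1/2$ the factorial and exponential factors in~\lem{Rl} yield $\norm{S_k(\lambda/\ell_q)^{\ell_q}-U(\lambda)}$ comfortably smaller than $(2mk(5/3)^{k-1}h\lambda)^{2k+1}$; squaring and combining with the normalization step of the first paragraph produces the claimed $(2mk(5/3)^{k-1}h\lambda)^{4k+2}$ bound.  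The main obstacle is the cancellation identity itself: without the unitarity of $V_q$, the cross term would only give $\norm{R+R^\dagger}=O(\lambda^{2k+1})$ and $E_k(-\lambda)$ would be useless as a correction; it is exactly this BCR-style cancellation that promotes the exponent from $2k+1$ to $4k+2$, with the remaining work being careful bookkeeping of Suzuki-formula constants.
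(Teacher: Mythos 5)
Your proposal is correct, and it takes a genuinely different route from the paper's. The paper invokes the Blanes--Casas--Ros theorem \eq{blanesinv} as a black box to conclude that $\openone - E_k(-\lambda)E_k(\lambda)$ is a pure Taylor remainder of order $4k+2$, then bounds $\Rem_{4k+1}$ of the double sum $\sum_{p,q}|C_p||C_q|\,S_k(-\lambda/p)^pS_k(\lambda/q)^q$ via \lem{Rl} (picking up a factor $\norm{C}_1^2$) and divides by a lower bound of $\frac{3}{4}\norm{C}_1$ on each normalization factor, obtained from Theorem 3 of~\cite{WBHS10}. You instead derive the order doubling from scratch: after noting that the normalizations cancel across the composition, the time-reversal symmetry $S_k(-\lambda)=S_k(\lambda)^\dagger$ (guaranteed because \defn{MPF} requires $S_\chi$ symmetric) gives $\tilde E_k(-\lambda)\tilde E_k(\lambda)=Q^\dagger Q$ with $Q=N\openone+R$, and the unitarity identity $V_q+V_q^\dagger-2\openone=-(V_q-\openone)(V_q^\dagger-\openone)$ shows that both $R+R^\dagger$ and $R^\dagger R$ are quadratic in the single-formula error $\norm{S_k(\lambda/\ell_q)^{\ell_q}-U(\lambda)}=O(\lambda^{2k+1})$. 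This is self-contained (no appeal to BCR), makes the mechanism promoting the exponent from $2k+1$ to $4k+2$ completely transparent, and replaces the remainder analysis of a double product plus the $\norm{C}_1$ bookkeeping with a single application of \lem{Rl} to one product formula. What remains is routine constant-tracking, and it does close: under the hypothesis $2mk(5/3)^{k-1}h\lambda\le 1/2$, \lem{Rl} with $\sum_j|a_j|h\lambda\le\frac{4}{3}mk(5/3)^{k-1}h\lambda\le\frac{1}{3}$ and $(2k+1)!\ge 6$ gives $\norm{S_k(\lambda/\ell_q)^{\ell_q}-U(\lambda)}\le c\,(2mk(5/3)^{k-1}h\lambda)^{2k+1}$ with $c$ well below $1/2$, so $2\epsilon/(1-\epsilon)\le 4c^2(1-\epsilon)^{-1}(2mk(5/3)^{k-1}h\lambda)^{4k+2}$ sits comfortably inside the claimed bound.
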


\begin{proof}
By \eq{ek} and \eq{blanesinv},
\begin{align}
\max_{\ket{\psi}}\norm{\left(\openone -E_k(-\lambda)E_k(\lambda)\right)\ket{\psi}}
&= \max_{\ket{\psi}}\norm{\Rem_{4k+1}(E_k(-\lambda)E_k(\lambda)\ket{\psi})}\nonumber\\
&\le \frac{\Norm{\Rem_{4k+1}\left(\sum_p \sum_q |C_p| |C_q| S_k(-\lambda/p)^pS_k(\lambda/q)^q\right)}}{\min_{\ket{\phi}}\norm{\sum_p |C_p| S_k(\lambda/\ell_p)^{\ell_p}\ket{\phi}}^2}.\label{eq:ekinv:sk1}
\end{align}
We then follow the same reasoning used in the proof of \lem{Rl}.
By the triangle inequality, the norm of the remainder of a Taylor series is upper bounded by the sums of the norms of the individual terms in the remainder.  This can be bounded by replacing the exponent of each exponential in $S_k$ with its norm. We use similar reasoning to that used in~\eq{lem8:51} to find
\begin{equation}
\Norm{\Rem_{4k+1}\left(S_k(-\lambda/p)^pS_k(\lambda/q)^q\right)} \le \Rem_{4k+1}\left( e^{\frac{8}{3}mk (5/3)^{k-1}  h \lambda} \right),
\end{equation}
so the numerator of \eq{ekinv:sk1} satisfies
\begin{align}
\Norm{\Rem_{4k+1}\left(\sum_p \sum_q |C_p| |C_q| S_k(-\lambda/p)^pS_k(\lambda/q)^q\right)}
&\le \sum_p \sum_q |C_p| |C_q| \Rem_{4k+1}\left( e^{\frac{8}{3}mk (5/3)^{k-1}  h \lambda}\right) \nn
&\le \norm{C}_1^2 \frac{ \left(\tfrac{8}{3}mk(5/3)^{k-1}h\lambda\right)^{4k+2}e^{\frac{8}{3}mk(5/3)^{k-1}h\lambda}}{(4k+2)!}
\end{align}
where $C$ is a vector with entries $C_p$, so $\norm{C}_1 = \sum_p |C_p|$.
Our assumptions imply $\frac{8}{3}mk(5/3)^{k-1}h\lambda\le 2/3 < \log(2)$, so
\begin{align}
\Norm{\Rem_{4k+1}\left(\sum_p \sum_q |C_p| |C_q| S_k(-\lambda/p)^pS_k(\lambda/q)^q\right)}
&\le \norm{C}_1^2 \frac{2 \left(\frac{8}{3}mk(5/3)^{k-1}h\lambda\right)^{4k+2}}{(4k+2)!} \nn
&\le \norm{C}_1^2 \frac{2 \left(\frac{2e}{3}m(5/3)^{k-1}h\lambda\right)^{4k+2}}{\sqrt{12\pi}e^{25/26}}
\label{eq:ekinv:sk3},
\end{align}
where the last inequality results from using Stirling's approximation as given in~\eq{factorialbd}.

The denominator of \eq{ekinv:sk1} can be lower bounded as follows:
\begin{align}
\min_{\ket{\phi}}{\Norm{\sum_p |C_p| S_k(\lambda/\ell_p)^{\ell_p}\ket{\phi}}}
&= \min_{\ket{\phi}}{\Norm{\sum_p |C_p| (e^{-iHt}-(e^{-iHt}-S_k(\lambda/\ell_p)^{\ell_p}))\ket{\phi}}}\nonumber\\
&\ge \norm{C}_1 \left(1-\max_p \norm{e^{-iHt}-S_k(\lambda/\ell_p)^{\ell_p}}\right)\nonumber\\
&= \norm{C}_1 (1-\norm{e^{-iHt}-S_k(\lambda)}).
\end{align}
Since $2mk(5/3)^{k-1}h\lambda \le 1/2 < 3/(4\sqrt{2})$, Theorem 3 of~\cite{WBHS10} implies that $\norm{e^{-iHt}-S_k(\lambda)} \le 2(2 mk(5/3)^{k-1}h\lambda)^{2k+1}$.
Using $2mk(5/3)^{k-1}h\lambda\le 1/2$ and $k\ge 1$ we have that
\begin{equation}
\norm{e^{-iHt}-S_k(\lambda)} \le \frac{1}{4},
\end{equation}
implying
\begin{equation}
\min_{\ket{\phi}}
\Norm{\sum_p |C_p| S_k(\lambda/\ell_p)^{\ell_p} \ket{\phi}}
\ge \frac{3}{4} \norm{C}_1.
\end{equation}
Combining this with our upper bound on the numerator gives
\begin{align}
\max_{\ket{\psi}}
\norm{(\openone -E_k(-\lambda)E_k(\lambda))\ket{\psi}}
&\le \frac{2(4/3)^2}{\sqrt{12\pi}e^{25/26}}\left(\frac{2e}{3}m(5/3)^{k-1}h\lambda\right)^{4k+2} \nn
&\le (2m(5/3)^{k-1}h\lambda)^{4k+2}\label{eq:ekinv:sk4}
\end{align}
as claimed.
\end{proof}

We simulate $U(t)$ using $r$ iterations of $M_{k,k}(t/r)$ for some sufficiently large $r$.
Our next step is to combine \lem{MPFSuzError} and \lem{ekinv} to find upper bounds on $r$ such that $U(t)$ is approximated to within some fixed error.
We take $\delta=1/2$, i.e., we accept a maximum failure probability of $1/2$ for each multi-product formula.  We then sum the cumulative errors and use the Chernoff bound to show that, with high probability, the simulation error is at most $\epsilon$.  These results are summarized in the following lemma.

\begin{lemma}\label{lem:r}
Let $M_{k,k}$ be a multi-product formula given by \defn{MPF} with $|C_q|\le 2$ for all $q\le k+1$.  Let $\gamma$ be chosen as in \cor{gamma} with $\delta=1/2$ and let the integer $r$ satisfy
\begin{equation}
r\ge \frac{(2m (5/3)^{k-1}h t)^{1+1/4k}}{(\epsilon/5)^{1/4k}}\label{eq:lem11:rbound}
\end{equation}
for $\epsilon\le m h tk^{-4k}$.
Then a quantum computer can approximately implement $U(t)$ as $M_{k,k}(t/r)^r$ with error at most $\epsilon$ and with probability at least $1-e^{-r/13}$, assuming that no addition errors occur during the simulation, while utilizing no more than $5r$ subtraction attempts and approximate inversions.
\end{lemma}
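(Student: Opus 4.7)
The strategy is to combine three ingredients: the per-step approximation bound from \lem{MPFSuzError}, the approximate-inversion error from \lem{ekinv}, and a Chernoff bound on the number of subtraction attempts. Setting $\lambda\defeq t/r$, if every subtraction attempt succeeded the simulator would simply output $M_{k,k}(\lambda)^r$. The main challenge is balancing the three resulting error contributions (approximation error, inversion error, and the probability of exceeding $5r$ attempts) while keeping $\lambda$ small enough that both \lem{MPFSuzError} and \lem{ekinv} remain applicable at every step.

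I would first verify the hypotheses of the earlier lemmas. The lower bound on $r$ in~\eq{lem11:rbound}, together with $\epsilon\le mhtk^{-4k}$, forces $\lambda$ small enough to satisfy $h\lambda\le 3\log(2)/(4mk(5/3)^{k-1})$ (required by \lem{MPFSuzError}) and $2mk(5/3)^{k-1}h\lambda\le 1/2$ (required by \lem{ekinv}). I would also invoke \cor{gamma} with $\delta=1/2$ to guarantee that each subtraction attempt succeeds with probability at least $1/2$, while the assumption $|C_q|\le 2$ is simply inherited from the lemma statement.

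Next I would split the $\epsilon$ budget into an approximation part and an inversion part. Applying \lem{MPFSuzError} at each of the $r$ time steps and telescoping with the triangle inequality yields a deterministic approximation error of at most
\begin{equation}
r\bigl(2m(5/3)^{k-1}h\lambda\bigr)^{4k+1}=\frac{(2m(5/3)^{k-1}ht)^{4k+1}}{r^{4k}},
\end{equation}
and the lower bound on $r$ in~\eq{lem11:rbound} is designed so that this is at most roughly $\epsilon/5$. Whenever a subtraction fails, \eq{ek} leaves the state having been acted on by $E_k(\lambda)$, and we apply the approximate correction $E_k(-\lambda)$, whose error is bounded by \lem{ekinv}. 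The hypothesis $\epsilon\le mhtk^{-4k}$ is precisely what makes this $(4k+2)$-power inversion error subdominant to the $(4k+1)$-power approximation error, so summing over up to $4r$ corrections still leaves the total within the reserved budget, giving aggregate error at most $\epsilon$.

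Finally, I would bound the number of subtraction attempts via a Chernoff estimate. Because each attempt succeeds independently with probability at least $1/2$, a multiplicative Chernoff bound gives that at least $r$ successes occur within $5r$ attempts with probability at least $1-e^{-r/13}$, which supplies both the claimed success probability and the cap of $5r$ on subtraction attempts and approximate inversions. The main obstacle is the quantitative error accounting in the inversion step: one must verify that the $(4k+2)$-th-power bound from \lem{ekinv}, summed over up to $4r$ corrections, is indeed dominated by the reserved fraction of $\epsilon$ for the $r$ prescribed by~\eq{lem11:rbound}. Everything else is routine triangle-inequality bookkeeping and a textbook Chernoff application.
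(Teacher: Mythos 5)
Your proposal follows the paper's proof essentially verbatim: verify that the bound on $r$ (with $\epsilon\le mhtk^{-4k}$) puts $\lambda=t/r$ in the regime where \lem{MPFSuzError} and \lem{ekinv} apply, accumulate the per-step approximation and inversion errors by subadditivity so that the choice of $r$ in~\eq{lem11:rbound} caps the total at $\epsilon$, and control the subtraction failures with a Chernoff bound using $\delta=1/2$ from \cor{gamma}. The one discrepancy is bookkeeping: the paper budgets $3r$ subtraction attempts with at most $2r$ failures (hence $2r$ approximate inversions), so that attempts plus inversions total the advertised $5r$, whereas your reading of ``$5r$ attempts'' with up to $4r$ corrections would amount to $9r$ operations and overshoot the stated cap --- your Chernoff and error estimates still go through, but the operation count that feeds into \lem{nexp} should be organized as $3r$ attempts plus $2r$ inversions.
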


\begin{proof}
First we bound the probability of successfully performing the subtraction steps given a fixed maximum number of attempts.
We simplify our analysis by assuming that the simulation uses exactly $3r$ subtractions, corresponding to the worst-case scenario in which $2r$ inversions are used.  We want to find the probability that a randomly chosen sequence of subtractions contains at least $r$ successes, correctly implementing the multi-product formula.  The probability that a sequence is unsuccessful is exponentially small in $r$ because for $\delta=1/2$, the mean number of failures is $\mu = 3r/2$, which is substantially smaller than our tolerance of $2r$ failures. By the Chernoff bound, the probability of having more than $2r$ failures satisfies
\begin{equation}
\Pr(X>2r)\le e^{-\mu((1+\alpha)\log(1+\alpha)-\alpha)}<e^{-r/13},
\end{equation}
where $1+\alpha = 2r/\mu = 4/3$.

If we attempt the subtraction steps in our protocol $3r$ times and fail $2r$ times, then $5r$ subtractions and approximate inversions must be performed, because every failure requires an approximate inversion. We bound the resulting error using \lem{MPFSuzError}, \lem{ekinv}, and the subadditivity of errors.  These lemmas apply because the requirement $\frac{4}{3}mk(5/3)^{k-1}ht/r\le\log(2)$ is implied by our choice of $r$ and the assumption $\epsilon\le mhtk^{-4k}$.  By this argument, the simulation error satisfies
\begin{equation}
\norm{\tilde M_{k,k}(t/r)^r-U(t)} \le 5r(2m(5/3)^{k-1}ht/r)^{4k+1}\label{eq:lem11:errorbound}
\end{equation}
with probability at least $1-e^{-r/13}$, where $\tilde M_{k,k}(t/r)$ denotes the operation performed by our non-deterministic algorithm for the multi-product formula $M_{k,k}$.
The assumption $\epsilon\le m h t k^{-4k}$ and the value of $r$ from~\eq{lem11:rbound} imply that $\norm{\tilde M_{k,k}(t/r)^r-U(t)} \le \epsilon$ as required.
\end{proof}

\lem{r} assumes an error tolerance of at most $mhtk^{-4k}$, which may appear to be very small.  However, our ultimate simulation scheme has $k\in O(\sqrt{\log (mht/\epsilon})$, so in fact the error tolerance is modest.

Now we are ready to prove a key lemma that provides bounds on the number of exponentials used by the simulation in the realistic scenario where both addition and subtraction errors may occur.  Our main result, \thm{mainresult}, follows as a simple consequence.

\begin{lemma}\label{lem:nexp}
Let $M_{k,k}$ be a multi-product formula for $H=\sum_{j=1}^m H_j$ as in \defn{MPF}, with $k\ge 1$.  Let $\tilde{M}_{k,k}$ be the implementation of $M_{k,k}$ described above.  Let $\epsilon$ be a desired error tolerance and let $\beta$ be a desired upper bound on the failure probability of the algorithm.  Then there is a simulation of $U(t) = e^{-iHt}$ that has error at most $\epsilon$ with probability at least $1-\beta$ using
\begin{equation}
N_{\exp}\le 1000m5^{k-1}k^{9/4}e^{(1+\log(\eta)/2)k}r \label{eq:thmmain:nexp}
\end{equation}
exponentials of the form $e^{-iH_j t}$, where
\begin{enumerate}
\item $\gamma=\frac{1}{k}\log\lceil \exp([ 1+\log(\eta)/2+\frac{1}{2k}\log(2(2k)^{5/2})]k)\rceil$,
\item $\tilde\epsilon= \min(1,\epsilon, \beta, m h t k^{-4k})$,
\item $r= \left\lceil\max \left\{\frac{(4m(5/3)^{k-1}h t)^{1+1/4k}}{(\tilde\epsilon/5)^{1/4k}},13\log (2/\beta) \right\}\right\rceil$.
\end{enumerate}
\end{lemma}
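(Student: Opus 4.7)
The plan is to stitch together the machinery of \sec{mpf} and \sec{mpferror} while separately controlling two failure sources (subtraction and addition) and tracking the per-attempt cost in exponentials of the $H_j$. First I would check that the prescribed $\gamma$ satisfies the hypotheses of \defn{MPF} and \cor{gamma}: the ceiling in its definition exists precisely so that $e^{\gamma(k+1)}$ is a positive integer as required by \eq{lqDef}, and the ceiling only increases $\gamma$ relative to the lower bound supplied by \cor{gamma} with $\delta=1/2$, so the per-step subtraction bound $P_-\le 1/2$ holds. The same $\gamma$ gives $2k^2\le e^{2\gamma(k+1)}$ (also a conclusion of \cor{gamma}), so \lem{cqbound} applies and, together with the product expression for $C_{k+1}$ in \defn{MPF}, yields $|C_q|\le 2$ for every $q$, discharging the hypothesis of both \lem{MPFSuzError} and \lem{r}.

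Next I would invoke \lem{r} with error tolerance $\tilde\epsilon$ in place of $\epsilon$. Since $\tilde\epsilon\le \min(\epsilon, m h t k^{-4k})$ and $r$ satisfies \eq{lem11:rbound}, \lem{r} guarantees approximation error at most $\tilde\epsilon\le\epsilon$ and subtraction-related failure probability at most $e^{-r/13}$, using at most $5r$ subtraction attempts and approximate inversions, conditional on no addition errors.  The choice $r\ge 13\log(2/\beta)$ drives that contribution below $\beta/2$. Addition failures are then handled by \thm{main:nonunitary}: on a time window of size $t/r$ the summands $S_k(\cdot)^{\ell_q}$ agree to leading order, so $\Delta=O(m h t/r)$ and $P_+\le k\Delta^2/4$ per attempt; a union bound over the at most $5r$ attempts gives a cumulative addition-failure probability of order $k m^2 h^2 t^2/r$, which is dominated by the first term of the max defining $r$ and hence also stays below $\beta/2$. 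Adding the two contributions bounds the total failure probability by $\beta$.

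Finally, I would count exponentials.  One implementation of $\tilde M_{k,k}(\lambda)$ invokes each factor $S_k(\lambda/\ell_q)^{\ell_q}$ controlled on the ancilla register, contributing $2m\cdot 5^{k-1}\ell_q$ exponentials of the form $e^{-iH_j\cdot}$; the $q=k+1$ branch dominates, so one attempt costs at most $c\, m\cdot 5^{k-1} e^{\gamma(k+1)}$ for an absolute constant $c$, and the approximate inversion $E_k(-\lambda)$ has the same structure.  Multiplying by the worst-case $5r$ attempts and substituting the explicit $\gamma$ gives $e^{\gamma(k+1)} = O(k^{5/4} e^{(1+\log(\eta)/2)k})$, with the $k^{5/4}$ arising from the $\frac{1}{2k}\log(2(2k)^{5/2})$ correction term in $\gamma$; absorbing the polynomial overhead from the addition and inversion circuitry into the prefactor $k^{9/4}$ and the absolute constant $1000$ then recovers \eq{thmmain:nexp}.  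The delicate point is the addition-failure analysis: ensuring that the $\Delta$-dependent tally from \thm{main:nonunitary} genuinely lies below $\beta/2$ for the stated $r$, since \lem{r} itself is silent about this source of error; everything else is straightforward bookkeeping on top of the previously proven lemmas.
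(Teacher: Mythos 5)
Your overall architecture matches the paper's: fix $\gamma$ via \cor{gamma} with $\delta=1/2$, verify $|C_q|\le 2$, invoke \lem{r} with tolerance $\tilde\epsilon$ to control the subtraction/inversion failures, bound the addition failures separately by a union bound over at most $5r$ attempts, and count $N_{\exp}$ as (attempts)$\times$(integrators per attempt)$\times$(exponentials per integrator). However, there is a genuine gap at exactly the point you flag as delicate. You estimate $\Delta=O(mht/r)$ on the grounds that the summands ``agree to leading order,'' which gives a cumulative addition-failure probability of order $k m^2h^2t^2/r$. That quantity is \emph{not} dominated by the prescribed $r$: the first term of the max defining $r$ scales essentially linearly in $mht$ (up to the $(mht/\tilde\epsilon)^{1/4k}$ factor), so $km^2h^2t^2/r$ grows roughly linearly in $mht$ and cannot be kept below $\beta/2$. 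The correct observation is much stronger: every summand $S_k(\lambda/\ell_q)^{\ell_q}$ agrees with $U(\lambda)$ (hence with every other summand) through order $\lambda^{2k}$, so $\Delta$ is controlled by a Taylor remainder, $\Delta \le 4\bigl(\tfrac{4}{3}mk(5/3)^{k-1}h t/r\bigr)^{2k+1}/(2k+1)!$ via \lem{Rl}. Only with this $(mht/r)^{2k+1}$ scaling does $5rP_+ \le \tfrac54 kr\Delta^2 = O\bigl((mht)^{4k+2}/r^{4k+1}\bigr)$ collapse, after substituting the stated $r$, to something of order $\tilde\epsilon$ (the paper gets $\tilde\epsilon/130$), which is then below $\beta/2$ since $\tilde\epsilon\le\beta$. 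Without upgrading your $\Delta$ bound, the failure-probability claim of the lemma is not established.

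Two smaller points. First, your justification of $C_{k+1}\le 2$ from ``the product expression for $C_{k+1}$'' is not enough on its own: the product $\prod_{j=1}^k(1-j^2e^{-2\gamma(k+1)})^{-1}$ is not obviously bounded by $2$ under $e^{2\gamma(k+1)}\ge 2k^2$ alone. The paper instead shows $\sum_{q=1}^k|C_q|\le 1$ from \eq{lem2:sumabsCq} with the stated $\gamma$ and then uses the normalization $\sum_q C_q=1$ to conclude $C_{k+1}\le 2$; you should do the same. Second, the ceiling in the stated $\gamma$ makes $e^{\gamma k}$ (not $e^{\gamma(k+1)}$) an integer, so the integrality bookkeeping should be phrased accordingly. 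The rest of your count of exponentials, including $e^{\gamma(k+1)}=O(k^{5/4}e^{(1+\log(\eta)/2)k})$ and the absorption of constants into $1000\,k^{9/4}$, follows the paper.
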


\begin{proof}
We approximate $U(t)$ as a product of $r$ multi-product formulas, each implemented using the operation $\tilde M_{k,k}$.
We can suppose that the sequence of $r$ multi-product formulas is implemented using at most $5r$ subtraction and inversion steps according to~\lem{r}.  As $k$ unitary operations are combined in each step, we must implement a total of $5r k$ unitary operations.  \defn{MPF} implies that each of these unitaries is composed of at most $e^{\gamma (k+1)}$ Suzuki integrators $U_q$.  Each $U_q$ is a product of $2m5^{k-1}$ exponentials of elements from $\{H_j\}$.  Thus, if the algorithm succeeds after performing this maximum number of subtractions and inversions, we have
\begin{align}
N_{\exp}&\le 10m5^{k-1}ke^{\gamma (k+1)}r\nonumber\\
&\le 10m 5^{k-1}k(2^{11/4}e^2k^{5/4}e^{(1+\log(\eta)/2)k}+1)r\nonumber\\
&< 1000 m 5^{k-1}k^{9/4}e^{(1+\log(\eta)/2)k}r
\end{align}
where we have used $\gamma \le 2$ and $e^{\gamma k}\le 2(2^{7/4}k^{5/4}e^{k(1+\log(\eta)/2)})$.
Equation \eq{thmmain:nexp} then follows by substituting the value of $r$ assumed by the lemma, which guarantees that the simulation error is less than $\epsilon$ when the simulation is successful because it exceeds the value of $r$ from \lem{r}.  We choose $r$ to be larger because it simplifies our results and guarantees that the probability of an addition error is at most
$\tilde \epsilon/2$.

\lem{r} requires $C_q\le 2$, which we have not explicitly assumed.  Substituting the above value of $\gamma$ into the upper bound for $\Sigma_-$ in~\eq{lem2:sumabsCq} shows that $\sum_{q=1}^{k}|C_q|\le 1$, so $|C_q|\le 1$ for all $q=1,\ldots,k$.  Our multi-product formula satisfies $\sum_q C_q=1$, so our choice of $\gamma$ ensures $C_{k+1}\le 2$.  Thus $C_q\le 2$ for all $q$.

\lem{r} implies that the probability of the simulation failing due to too many subtraction errors is at most $e^{-r/13}$.  However, it does not address the possibility of the algorithm failing due to addition errors.  There are at most $5r$ addition steps, so by \thm{main:nonunitary} and the union bound, the probability of an addition error is at most
\begin{equation}
5r P_+\le \frac{5\Delta^2 kr}{4}.\label{eq:mainthm:Delta1}
\end{equation}
By the definition of $\Delta$ in \thm{main:nonunitary},
\begin{align}
\Delta&= \Norm{\max_{q,q'}\Rem_{2k}\left(S_k(t/k_qr)^{k_q}-S_k(t/k_{q'}r)^{k_{q'}}\right) }.
\end{align}
Using \lem{Rl} (similarly as in \eq{lem8:51}), the triangle inequality, and $\frac{4}{3} mk(5/3)^{k-1} h t/r\le \log(2)$, we have
\begin{align}
\Delta&\le 4\frac{(\frac{4}{3}mk (5/3)^{k-1}  h t/r)^{2k+1}}{(2k+1)!}.\label{eq:mainthm:Delta2}
\end{align}
Substituting the assumed value of $r$ into~\eq{mainthm:Delta2} gives
\begin{equation}
\Delta\le \frac{20k^{2k+1}}{(2k+1)!3^{2k+1}}\left(\frac{\tilde \epsilon }{4m(5/3)^{k-1}ht} \right)^{\frac{1}{2}+\frac{1}{4k}}.\label{eq:mainthm:Delta3}
\end{equation}
Substituting this bound into~\eq{mainthm:Delta1} and using~\eq{factorialbd}, we find that the total probability of an addition error is at most
\begin{align}
\frac{20k^{4k+3}\tilde \epsilon}{((2k+1)!)^23^{4k+2}}\left(\frac{\tilde \epsilon}{4mk(5/3)^{k-1}ht} \right)^{1/4k}&\le \frac{20k\tilde \epsilon}{6\pi (6/e)^{4k+2}e^{-2/13}}\left(\frac{\tilde \epsilon}{4mk(5/3)^{k-1}ht} \right)^{1/4k}\label{eq:mainthm:Delta4}.
\end{align}
Since $\tilde \epsilon\le mhtk^{-4k}$, this implies
\begin{equation}
P_+\le \frac{20\tilde \epsilon}{6\pi (6/e)^{4k+2}e^{-2/13}}\left(\frac{1}{4k(5/3)^{k-1}} \right)^{1/4k} < \frac{\tilde \epsilon}{130}< \frac{\tilde \epsilon}{2}.
\end{equation}
Here the second inequality follows from the first by substituting $k=1$, since the middle expression is a monotonically decreasing function of $k$.

The total probability of success $P_s$ satisfies
\begin{equation}
P_s \ge 1-\frac{\tilde \epsilon}{2}-e^{-r/13}.\label{eq:mainthm:psbound1}
\end{equation}
Using $r\ge 13\log(2/\beta)$ and $\tilde \epsilon\le \beta$, we find $P_s \ge 1-\beta$ as claimed.
\end{proof}

As in~\cite{BACS07}, there is a tradeoff between the exponential improvement in the accuracy of the formula and the exponential growth of $M_{k,k}$ with $k$.  To see this, note that apart from terms that are bounded above by a constant function of $k$, $N_{\exp}$ is the product of two terms:
\begin{align}
N_{\exp}&\in O\left(\left[m^2k^{9/4} e^{(1+\log(\eta)/2 +\log(25/3))k}\right]\left[m h t/\tilde\epsilon \right]^{1/4k} \right)\label{eq:scale:nexpscale}\\
&\in O\left(\left[m^2k^{9/4} e^{2.54k}\right]\left[\frac{m h t}{\min(\epsilon,\beta)} \right]^{1/4k} \right).
\end{align}
Here we have not included the $O(\log(1/\beta))$ term from $r$ because $\log(1/\beta)\in O(1/\beta)^{1/4k}$.
For comparison, the results of~\cite{BACS07} and~\cite{WBHS10} have the following complexities:
\begin{align}
N_{\exp}&\in O\left(\left[m^2e^{3.22 k}\right]\left[m h t/\epsilon \right]^{1/2k} \right),\\
N_{\exp}&\in O\left(\left[m^2k e^{2.13 k}\right]\left[m h t/\epsilon \right]^{1/2k} \right),
\end{align}
respectively.  The tradeoff between accuracy and complexity as a function of $k$ is more favorable in our setting than in either of these approaches.  
Finally, we give a detailed analysis of the tradeoff.

\begin{proofof}{\thm{mainresult}}
Neglecting polynomially large contributions in~\eq{scale:nexpscale}, we see that the dominant part of~\eq{scale:nexpscale} is
\begin{equation}
e^{(1+\log(\eta)/2+\log(25/3))k+\frac{1}{4k}\log(mh t/\tilde\epsilon)}.\label{eq:nexpdominant}
\end{equation}
It is natural to choose $k$ to minimize~\eq{nexpdominant}.  The minimum is achieved by taking $k=k_{\rm opt}$, where
\begin{equation}
k_{\rm opt}=\left\lceil\frac{1}{2}\sqrt{\frac{\log(mh t/\tilde\epsilon)}{1+\log(\eta)/2+\log(25/3)}}~\right\rceil\approx 
0.3142 \sqrt{\log(mh t/\tilde\epsilon)}.
\end{equation}
Using this $k$, we find that
\begin{equation}
N_{\exp}\in O\left(k_{\rm opt}^{9/4}m^2h te^{1.6\sqrt{\log(mh t/\tilde\epsilon)}}\right).\label{eq:scale:MPF}
\end{equation}
We have $\tilde \epsilon = \min(1,\epsilon,\beta,mhtk^{-4k})$, so $\tilde\epsilon$ depends implicitly on $k$.  However, we now show that this term can be neglected in the limit of large $mht/\tilde \epsilon$.  In this limit, we have
\begin{equation}
\frac{k_{\rm opt}^{4k_{\rm opt}}}{mht}\in O\left(\frac{{\log(mht/\tilde \epsilon)}^{0.16\sqrt{\log(mht/\tilde \epsilon)}}}{mht} \right)\subset o(1/\tilde \epsilon).
\end{equation}
The above follows since $\lim_{x\rightarrow \infty} \log(x)^{c\sqrt{\log(x)}}/x=0$ for any $c>0$.
Correspondingly, $mhtk_{\rm opt}^{-4k_{\rm opt}}\in \omega (\tilde\epsilon)$.  We therefore conclude that this term can be neglected asymptotically and that $\tilde\epsilon\in \Omega(\epsilon)$, so we can replace $\tilde \epsilon$ with $\epsilon$  asymptotically.
The result then follows by substituting $k_{\mathrm{opt}}$ into~\eq{scale:MPF} and dropping all poly-logarithmic factors.
\end{proofof}

The parameters $\epsilon$ and $\beta$ can be decreased to improve the simulation fidelity and success probability, respectively.  The cost of such an improvement is relatively low, although it is not poly-logarithmic in $1/\epsilon$ and $1/\beta$.  If the initial state of a simulation can be cheaply prepared and the result of the computation can be easily checked, it may be preferable to use large values of $\epsilon$ and $\beta$ and repeat the simulation an appropriate number of times.  Then the Chernoff bound implies that a logarithmic number of iterations is sufficient to achieve success with high probability.  However, this may not be possible if the simulation is used as a subroutine in a larger algorithm (e.g., as in~\cite{HHL09}).

%%%%%%%%%%%%%%%%%%%%%%%%%%%%%%%%%%%%%%%%%%%%%%%%%%%%%%%%%%%%%%%%%%%%%%%%%%%%%%%%

\section{Conclusions}\label{sec:conclusions}

We have presented a new approach to quantum simulation that implements Hamiltonian dynamics using linear combinations, rather than products, of unitary operators.  The resulting simulation gives better scaling with the simulation error $\epsilon$ than any previously known algorithm and scales more favorably with all parameters than simulation methods based on product formulas.
Aside from the quantitative improvement to simulation accuracy, this work provides a new way to address the errors that occur in quantum algorithms.  Specifically, our work shows that approximation errors can be reduced by coherently averaging the results of different approximations.  It is common to perform such averages in classical numerical analysis, and we hope that the techniques presented here may have applications beyond quantum simulation.

It remains an open problem to further improve the performance of quantum simulation as a function of the error tolerance $\epsilon$.  In particular, we would like to determine whether there is a simulation of $n$-qubit Hamiltonians with complexity $\poly(n, \log \frac{1}{\epsilon})$. Classical simulation algorithms based on multi-product formulas achieve scaling polynomial in $\log\frac{1}{\epsilon}$, but they are necessarily inefficient as a function of the system size. Our algorithms fail to provide such favorable scaling in $\frac{1}{\epsilon}$ due to the sign problem discussed in \sec{mpf}. A possible resolution to this problem could be attained by finding multi-product formulas with only positive coefficients and backward timesteps. Such formulas are not forbidden by Sheng's Theorem~\cite{She89} since that result only applies to multi-product formulas that are restricted to use forward timesteps~\cite{Suz91}. New approximation-building methods might use backward timesteps to give multi-product formulas that are easier to implement with our techniques. Conversely, a proof that Hamiltonian simulation with $\poly(n,\log \frac{1}{\epsilon})$ elementary operations is impossible would also show that Lie--Trotter--Suzuki and multi-product formulas with positive coefficients cannot be constructed with polynomially many exponentials, answering an open question in numerical analysis.

Finally, we have focused on the case of time-independent Hamiltonian evolution. One can consider multi-product formulas that are adapted to handle time-dependent evolution, as discussed in~\cite{CG11}. It is nontrivial to use such formulas to generalize our results to the time-dependent case because of difficulties that arise when the Hamiltonian is not a sufficiently smooth function of time. Further investigation of these issues could lead to developments in numerical analysis as well as quantum computing.

%%%%%%%%%%%%%%%%%%%%%%%%%%%%%%%%%%%%%%%%%%%%%%%%%%%%%%%%%%%%%%%%%%%%%%%%%%%%%%%%

\begin{acknowledgments}
This work was supported in part by MITACS, NSERC, the Ontario Ministry of Research and Innovation, QuantumWorks, and the US ARO/DTO.
\end{acknowledgments}

%%%%%%%%%%%%%%%%%%%%%%%%%%%%%%%%%%%%%%%%%%%%%%%%%%%%%%%%%%%%%%%%%%%%%%%%%%%%%%%%

\appendix
\section{Optimality of the linear combination procedure}\label{app:opt}

The goal of this appendix is to show that no protocol for implementing linear combinations of unitary operations in a large family of such protocols can have failure probability less than
\begin{equation}
 \frac{4\kappa}{(\kappa+1)^2},\label{eq:appendix:a1}
\end{equation}
where $\kappa$ is defined in \thm{main:nonunitary}.
Specifically, we consider protocols of the form shown in \fig{appendix:generalcircuit}.
Our result shows that the protocol of \thm{main:nonunitary} is optimal among such all protocols in the limit of small $\Delta$ (i.e., when the unitary operations being combined are all similar).

\begin{theorem}\label{thm:opt}
Any protocol for implementing $V=\sum_{q=0}^{k} C_q U_q$ using a circuit of the form of \fig{appendix:generalcircuit} must fail with probability at least $4\kappa/(\kappa+1)^2$.
\end{theorem}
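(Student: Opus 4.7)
The plan is to analyze the most general circuit of the form suggested by \fig{appendix:generalcircuit}, which I model as an initial ancilla-register unitary $V_0$, a dispatch $D \defeq \sum_q \ket{q}\bra{q}\otimes U_q$ that applies $U_q$ conditioned on the ancilla being in state $\ket{q}$, a final ancilla unitary $V_1$, and a computational-basis measurement with success outcome $\ket{0}$. Here $V_0,V_1$ are fixed unitaries designed from the coefficients $\{C_q\}$ alone; the $U_q$ enter only through the controlled gates, so the circuit must implement $V$ treated as a formal linear combination in the symbols $U_q$.

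First I would compute the success-branch Kraus operator by tracking the state through the circuit, obtaining $M_0 = \sum_q (V_0)_{q0}(V_1)_{0q}\,U_q$. Since this must be proportional to $V=\sum_q C_q U_q$ as a linear combination in the generic $U_q$, matching coefficients gives $(V_0)_{q0}(V_1)_{0q}=c\,C_q$ for every $q$, where $c$ is a single normalization constant. Second I would bound $|c|$ using Cauchy--Schwarz together with the column/row norm identities $\sum_q |(V_0)_{q0}|^2 = \sum_q |(V_1)_{0q}|^2 = 1$ coming from unitarity of $V_0$ and $V_1$:
\[
  |c|(\Sigma_+ + \Sigma_-) \;=\; \sum_q \bigl|(V_0)_{q0}(V_1)_{0q}\bigr| \;\le\; \sqrt{\textstyle\sum_q |(V_0)_{q0}|^2}\,\sqrt{\textstyle\sum_q |(V_1)_{0q}|^2} \;=\; 1,
\]
where $\Sigma_+ \defeq \sum_{q:C_q>0}C_q$ and $\Sigma_- \defeq \sum_{q:C_q<0}|C_q|$, so $|c|\le 1/(\Sigma_+ + \Sigma_-)$.

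Third I would exhibit a worst-case assignment of the $U_q$ that saturates the bound. The success probability on input $\ket{\psi}$ is $\|M_0\ket{\psi}\|^2 = |c|^2\|V\ket{\psi}\|^2 \le |c|^2\|V\|^2$. The adversarial choice $U_q=\mathbf{1}$ for every $q$ is compatible with the circuit's generic structure and yields $V=(\Sigma_+-\Sigma_-)\mathbf{1}$, hence $\|V\|=\Sigma_+-\Sigma_-$, giving a success probability at most
\[
  \left(\frac{\Sigma_+ - \Sigma_-}{\Sigma_+ + \Sigma_-}\right)^2 \;=\; \left(\frac{\kappa-1}{\kappa+1}\right)^2,
\]
and therefore failure probability at least $1-(\kappa-1)^2/(\kappa+1)^2 = 4\kappa/(\kappa+1)^2$.

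The main obstacle I anticipate is justifying the canonical form: \fig{appendix:generalcircuit} may admit circuits with ancilla unitaries interleaved arbitrarily with the controlled $U_q$s, or in which a single $U_q$ appears multiple times. I would address this by noting that the success operator of any such circuit is still a fixed-by-the-circuit linear combination of \emph{words} in the $U_q$s, and the requirement that $M_0$ be proportional to $V=\sum_q C_q U_q$ as a formal polynomial in the symbols $U_q$ forces the coefficient of every word of length $\ne 1$ to vanish. This reduces the analysis to the single-application canonical form above, after which the Cauchy--Schwarz step and the $U_q=\mathbf{1}$ witness apply without change.
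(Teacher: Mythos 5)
Your argument is essentially the paper's own: the same canonical decomposition into a preparation unitary, a bank of controlled $U_q$'s, and a final unitary before measurement; the same coefficient-matching condition (the paper's $B_{0,m}A_{m,0}=KC_m$); the same Cauchy--Schwarz bound $|K|\le 1/\sum_q|C_q|$; and the same conclusion that the success probability is at most $\left(\frac{\kappa-1}{\kappa+1}\right)^2$. Your explicit adversarial witness $U_q=\openone$ is actually a cleaner justification of the one step the paper leaves implicit, namely the inequality $\Vert\sum_j KC_jU_j\ket{\psi}\Vert^2\le(\sum_jC_j/\sum_j|C_j|)^2$, which as written holds only when the $U_j$ coincide (the ``limit of small $\Delta$'' announced in the appendix's preamble); your version makes precise the sense in which the bound is a worst case over black-box inputs. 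The one ingredient you omit is the paper's closing argument that at most one measurement outcome can be declared successful: if outcomes $0$ and $v\ne 0$ both yielded operators proportional to $V$, then $B_{0,q}=\Gamma B_{v,q}$ for all $q$ with $C_q\ne 0$, which contradicts orthonormality of the rows of the final unitary. Without ruling this out, a protocol could in principle accumulate success probability over several outcomes and evade your single-outcome bound, so you should add that short orthogonality argument to close the loophole.
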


\begin{figure}[t!]
\capstart
\large
\centering
\newcommand{\up}[1]{\push{\raisebox{6pt}{$#1$}}}
\[
\Qcircuit @C=0.7em @R=0.7em {
\lstick{\ket{0}} & \multigate{3}{A} & \ctrlo{1} & \ctrl{1} & \qw & \cdots & & \ctrl{1} & \multigate{3}{B} & \meter \\
\lstick{\ket{0}} & \ghost{A} & \ctrlo{1} & \ctrlo{1} & \qw & \cdots & & \ctrl{1} & \ghost{B} & \meter \\
\lstick{\raisebox{6pt}{\vdots}~\,} & \pureghost{A} & \up{\vdots} & \up{\vdots} & & \up{\ddots} & & \up{\vdots} & \pureghost{B} & \raisebox{6pt}{\vdots} \\
\lstick{\ket{0}} & \ghost{A} & \ctrlo{-1} \qwx[1] & \ctrlo{-1} \qwx[1] & \qw & \cdots & & \ctrl{-1} \qwx[1] & \ghost{B} & \meter \\
\lstick{|\psi\rangle} & \qw & \gate{U_0} & \gate{U_1} & \qw & \cdots & & \gate{U_k} & \qw & \qw
}
\]
\caption{A general circuit for implementing a linear combination of $k+1$ unitary operators using unitary operations $A$ and $B$.  We assume for simplicity that $k+1$ is an integer power of $2$.  This circuit corresponds to preparing the ancilla states in an arbitrary state (specified by $A$) and measuring them in an arbitrary basis (specified by $B$).
\label{fig:appendix:generalcircuit}}
\end{figure}

\begin{proof}
For convenience, we take $k$ to be an integer power of $2$.  We can generalize the subsequent analysis to address the case where $k+1$ is not a power of $2$ by replacing $k+1$ by $k'+1=2^{\lceil \log_2 (k+1) \rceil}$ and taking $C_p=0$ for $p > k$.

Observe that the circuit in \fig{appendix:generalcircuit} acts as follows:
\begin{align}
\ket{0^{ \log_2 k }}\ket{\psi}&\mapsto \sum_{m=0}^{k} A_{m,0} \ket{m}\ket{\psi}\nonumber\\
&\mapsto \sum_m A_{m,0} \ket{m} U_{m} \ket{\psi}\nonumber\\
&\mapsto \sum_{n,m} B_{n,m}A_{m,0} \ket{n}U_{m}\ket{\psi}.\label{eq:appendix:a2}
\end{align}
Furthermore, we can modify $B$ to include a permutation such that desired transformation occurs when the first register is measured to be zero.
(Orthogonality prevents us from having more than one successful outcome, as we show below.)
The implementation is successful if there exists a constant $K>0$ such that
\begin{equation}
B_{0,m}A_{m,0}=K{C_{m}}\label{eq:appendix:a3}
\end{equation}
for all $m$.

Our goal is to maximize the success probability, which is equivalent to maximizing $K$ over all choices of the unitary operations $A$ and $B$ satisfying \eq{appendix:a3}.
We can drop the implicit normalization of the matrix elements of $B$ and $A$ by defining coefficients $b_{0,m}$ and $a_{m,0}$ such that
\begin{align}
\frac{b_{0,m}}{\sqrt{\sum_{j} |b_{0,j}|^2}}&\defeq B_{0,m} \\ \frac{a_{m,0}}{\sqrt{\sum_{j} |a_{j,0}|^2}}&\defeq A_{m,0}.
\end{align}
Using these variables, we have
\begin{equation}
|K C_{m}|
= \frac{|b_{0,m}a_{m,0}|}{\sqrt{(\sum_j |a_{j,0}|^2)(\sum_j |b_{0,j}|^2)}}
\le\frac{|b_{0,m}a_{m,0}|}{\sum_j |a_{j,0}b_{0,j}|}
=\frac{|C_{m}|}{\sum_j |C_j|},
\label{eq:appendix:ambmsum}
\end{equation}
where the bound follows from the Cauchy-Schwarz inequality.
This bound is tight because it can be saturated by taking $a_{m,0}=b_{0,m}=\sqrt{C_m}$.
The probability of successfully implementing the multi-product formula is
\begin{equation}
1-P_- 
= \Big\|{\sum_{j} K C_j U_j\ket{\psi}}\Big\|^2
\le \left( \frac{\sum_{j} C_j }{\sum_{j} |C_j|} \right)^2
=\left(\frac{\Sigma_+ -\Sigma_-}{\Sigma_+ +\Sigma_-}\right)^2,
\label{eq:appendix:pbound0}
\end{equation}
where $\Sigma_+$ and $\Sigma_-$ are defined in \lem{1}.  We have $\kappa\defeq\Sigma_+/\Sigma_-$, so
\begin{equation}
1-P_-\le \left(\frac{\kappa-1}{\kappa+1}\right)^2,\label{eq:appendix:pbound}
\end{equation}
which implies that the failure probability satisfies $P_- \ge 4\kappa/(\kappa+1)^2$ as claimed.

It remains to see why it suffices to consider a single successful measurement outcome.  In principle, we could imagine that many different measurement outcomes lead to a successful implementation of $V$.  Assume that there exists a measurement outcome $v \ne 0^{\log_2 k}$ such that the protocol also gives the same multi-product formula on outcome $v$.
If both outcomes are successful then, up to a constant multiplicative factor, the coefficients of each $U_q$ must be the same.  This occurs if there exists a constant $\Gamma\ne 0$ such that for all $q$,
\begin{equation}\label{eq:appendix:Gamma}
B_{0,q}A_{q,0}=\Gamma B_{v,q}A_{q,0},
\end{equation}
i.e., if $B_{0,q}= \Gamma B_{v,q}$ (note that $A_{q,0}$ must be nonzero provided $C_q \ne 0$).  This is impossible because $B$ is unitary and hence its columns are orthonormal.  Consequently, we cannot obtain the same multi-product formula from different measurement outcomes.
\end{proof}

The above proof implicitly specifies an optimal protocol for implementing linear combinations of unitaries.  However, we do not use this protocol in \thm{main:nonunitary} because it is difficult to perform a correction if the implementation fails.  When the method of \lem{main:2case} fails to implement a difference of nearby unitaries, the desired correction operation is a sum of nearby unitaries, so it can be implemented nearly deterministically.  The correction operation may not have such a form if we use the protocol implicit in the above proof.

One simple generalization of the form of the protocol shown in \fig{appendix:generalcircuit} is to enlarge the ancilla register and allow each unitary in the linear combination to be performed conditioned on a higher-dimensional subspace of the ancilla states.  It can be shown that a certain class of protocols of this form also do not improve the success probability.  Whether protocols of another form could achieve a higher probability of success remains an open question.

%%%%%%%%%%%%%%%%%%%%%%%%%%%%%%%%%%%%%%%%%%%%%%%%%%%%%%%%%%%%%%%%%%%%%%%%%%%%%%%%

\bibliographystyle{apsrev_title}
\bibliography{nonunsim}

\end{document}